\newtheorem{theorem}{Theorem}[section]
\newtheorem{algorithm}[theorem]{Algorithm}
\newcommand{\sumi}[1]{\sum_{m=1}^{{#1}}}
\newcommand{\sumij}[1]{\sum_{m,n=1}^{{#1}}}
\newcommand{\pd}[2]{\frac{\partial f}{\partial {#1}^{({#2})}}}
\newcommand{\pds}[2]{\frac{\partial^2 f}{\partial {{#1}^{({#2})}}^2}}
\newcommand{\pdm}[4]{\frac{\partial^2 f}{\partial {#1}^{({#2})} \partial {#3}^{({#4})}}}
\newcommand{\pdt}{\frac{\partial f}{\partial t}}
\newcommand{\rh}[2]{\rho_{{#1}{#2}}}
\newcommand{\E}{\mathbb E}
\newcommand{\rml}{\right.\\}
\newcommand{\lml}{&\quad \left.}
\newcommand{\vi}{v^{(m)}}
\begin{document}
\title[]{ Pricing Spread Options under Stochastic Correlation and  Jump-Diffusion Models}
\author{Matthew Cane, Pablo Olivares}
\email{pablo.olivares@ryerson.ca, mcane2ryerson.ca}
\keywords{Spread options, Fast Fourier transform, Multivariate Jump-diffusion}

\begin{abstract}
This paper examines the problem of pricing spread options under  some models with jumps driven by  Compound Poisson Processes and stochastic volatilities in the form of  Cox-Ingresoll-Ross(CIR) processes.  We derive the characteristic function for two market models featuring joint normally distributed jumps, stochastic volatility, and  different stochastic dependence structures. With the  use of Fast Fourier Transform(FFT) we  accurately compute  spread option prices across a variety of strikes and initial price vectors at a very low computational cost when compared to Monte Carlo pricing methods. We also look at the sensitivities of the prices to the model specifications and find strong dependence on the selection of the jump and stochastic volatility parameters. Our numerical implementation is based on the method developed by Hurd and Zhou (2009).
\end{abstract}
\maketitle

\section{Introduction}
This paper examines the problem of pricing spread options under market models with jumps driven by  correlated Compound Poisson Processes and stochastic volatilities in the form of  CIR processes, using a bivariate inverse Fourier transform method. \\
While the Black-Scholes model, see Black and Scholes (1973), was an important leap in the mathematical modeling of asset prices, its is now well documented that the model fails to capture critical empirical features observed in financial markets. Heston (1993) has extended the former by including a stochastic volatility dynamic to better model the implied volatility smiles and smirks observed in option prices, while several authors have introduced random jumps to capture sudden oscillations in the asset prices, see the pioneer works of Merton (1974)  and Kou and Wang (2004).\\
 On the other hand Bates (1996)  combines both features, i.e. stochastic volatility and jumps, an idea followed more recently in, for example, Thavaneswaran and Singh(2010) and Wang, Shashan and Shenghong (2009). As Cont and Tankov(2003)  note the presence of both jumps and stochastic volatility terms allows for a much greater flexibility in modeling both short and long term smiles. See also Zhanga and Wangb (2013) for a model with stochastic volatility, stochastic interest rates and jumps.\\
 Most of these models can  be naturally extended to a multivariate setting in order to  price derivatives whose value depends on multiple assets such as spreads, basket options, correlation options, quantos, etc.\\
    The pricing of  multi-asset derivatives has been mostly studied under models  assuming continuous trajectories and a constant correlation between the underlying, see for example Carmona and Durrleman (2003)  or Li, Deng and Zhou (2008), using suitable approximations of the corresponding discounted expected payoff.\\
     An extension to a model with stochastic correlation, but without jumps, is considered in  Dempster and Hong (2000)  where spread prices are computed for asset models with stochastic proportional volatilities. On the other hand Hurd and Zhou (2009)  consider spread option pricing when assets follow discontinuous trajectories described by a bivariate Variance Gamma Levy process. Both approaches use FFT, although in a different way.\\
 Since its introduction by  Carr and Madan (1999), the use of FFT methods has become a standard in pricing univariate derivatives. Implementing it on a two dimensional grid determined by the spread payoff still poses some numerical challenges, among them the effects of the grid, truncation integration and damping parameter in  errors when computing prices. Moreover, bivariate FFT implementations under a model with jumps and stochastic correlation seem have not being considered yet. Our findings show that Hurd and Zhou's method can be well adapted to models with both above mentioned features. See also  Kwok Y.K.,  Leung K.S., and Hoi Ying Wong H.Y.(2012) for  discussion about efficiency and implementation. It is worth noticing that in these cases a Monte Carlo technique requires a large number of simulations to achieve a reasonable  accuracy, in particular when a sensitivity analysis is considered.
\\
The organization of the paper is as follows. In section 2, multivariate derivative and Fourier transform methods for pricing options are reviewed, it walks through the derivation of the pricing method outlined by Hurd and Zhou(2009).  In section 3,  the development of market models for the movement of asset prices is presented, as well as the rationale behind our selection of the market models. We then work through the derivation of the characteristic function for both of our market models. Section 4 will look at the numerical results from our implementation of both a Monte Carlo and Fourier transform pricing tool, and examine the sensitivity of our models to the various input parameters. Finally, Section 5 contains a summary.
\section{Multivariate Derivative and the FFT Pricing Method}
 We consider a pair of stochastic processes $S_t=(S^{(1)}_t, S^{(2)}_t)_{0 \leq t \leq T}$  defined on the stochastic basis $(\Omega ,\mathcal{A},\{\mathcal{F}_{t}\}_{t \geq 0},\mathbb{P})$, with the filtration $\{\mathcal{F}_{t}\}_{t \geq 0}$ satisfying the usual conditions. A European  derivative with an arbitrary payoff, $H(S_T)$ at maturity date $T$ is considered.\\
  Assuming the existence of a risk neutral equivalent martingale measure $\mathbb{Q}$ the value (or price) of the derivative at time $t$ is given by:
\begin{equation}\label{eq:fundthm}
V(S_T) = e^{-r_f (T-t)} {\E}_\mathbb{Q} (H (S_T) | \mathcal{F}_{t})
\end{equation}
where $r_f$ is a deterministic risk-free interest rate and $\E_{\mathbb{Q}}$ indicates the expected value under $\mathbb{Q}$.\\ 
While this paper explores only the pricing of spread options, here we present other multivariate derivatives and their payoffs to illustrate a selection of the products that are traded both over-the-counter and through exchanges. The method is easily extended to other multi-asset derivatives such as \textit{exchange contracts}, see Margrabe(1978) for the pricing within a Black-Scholes framework and Cheang and Chiarella(2008)  under a jump-diffusion market model, the case of \textit{basket options, correlation options and quantos}.  \\
Spread options are derivatives whose payoffs are  based on the difference between two asset prices. The payoff of a European Call spread option at time T  is given by:

\begin{equation*}
H(S_T) = (S^{(1)}_T - S_T^{(2)} - K)_+
\end{equation*}
where K is  the strike price, and the notation $x_+=max(x,0)$.\\
Spread options are used both as hedging tools and as speculative instruments. Spread options are widespread in commodity markets, where they can be used to hedge against the conversion or production costs for raw goods. For example the crack spread is based on the difference in prices between refined oil products (such as gasoline) and crude oil and sparks look at the difference between prices of electricity and oil. See for example Hikspoors and  Jaimungal(2006),  or   Hambly, Howison and Kluge(2007).\\
Spread options can also be used as speculative tools, as they allow the purchaser to effectively "bet" on the correlation between the two assets. For example, if an investor believes that the correlation between the assets will decrease (and thus the spread may widen) the investor would long a call, while if they believe a correlation increase will occur (and thus the spread will likely hold at the same level), they should write a call on the underlying assets.\\
Fourier transform methods provide an efficient and widely-used alternative to pricing by Monte Carlo and other numerical methods when the characteristic function of the underlying market model is known. Carr and Madan(1999)  first used the Fourier transform to price European call options, while both Dempster and Hong(2000) and Hurd and Zhou(2009)  derived Fourier Transform methods to price spread options. Eberlien et al.(2009) give an overview of both the univariate and multivariate cases which have been examined so far, and look at pricing options based on the minimum price of a basket of assets.\\
Fourier transform methods rely on the knowledge of the characteristic function for the underlying market model. We denote the characteristic function of a random vector $X = (X^{(1)}, X^{(2)},..., X^{(d)})$ by:
\begin{equation*}
\phi_{X} (u) = \E_{\mathbb{Q}} \left( e^{i u \cdot X} \right)\;\;\text{for}\;\;u \in \mathbb{R}^d
\end{equation*}
where $a \cdot b$ denotes the scalar product of $a$ and $b$.\\
We outline Hurd and Zhou's method in the following terms:\\
For a derivative dependent on two assets, with an arbitrary payoff $H(S_T)$ with Fourier transform $\hat{H}_T (u)$, the price at time $t=0$ of the derivative can be calculated as:
\begin{equation}\label{eq:fourierPrice}
V_T(S_0) =  \frac{e^{-r_f T}}{(2 \pi)^2} \iint \limits_{\mathbb{R}^2 + i \epsilon} e^{i u \cdot S_0} \phi_{S_T} (u) \hat{H}_T(u) d^2 u
\end{equation}
where we note that since the increments in $S_t - S_0$ is independent of $S_0$ (as it will be in all the models we analyze), we can write:
\begin{equation*}
\E_{\mathbb{Q}} (e^{i u \cdot S_T} | S_0) = e^{i u \cdot S_0} \phi_{S_T} (u)
\end{equation*}
It begins by discretizing the complex domain as:
\begin{equation*}
\Gamma = \{ u(k) = (u(k_1), u(k_2)) | k = (k_1,k_2), \in \{0,1,...,(N-1)\}^2 \}
\end{equation*}
where $u_i (k_i) = -\bar{u} + k_i \eta$ over N points, with $\bar{u} = \frac{N \eta}{2}$ being the truncated end-points for our numerical integration. Based on our choices for $N$, $\eta$ and $\bar{u}$, we can discretize the real domain as:
\begin{equation*}
\Gamma^* = \{ x(l) = x(l_1), x(l_2) | l = (l_1,l_2), \in {0,...,N-1}^2 \}, x_i (l_i) = -\bar{x} + l_i \eta^*
\end{equation*}
where $\eta^* = \frac{2 \pi}{N \eta}$ and $\bar{x} = \frac{N \eta^*}{2}$. The value of contract can then be estimated as:
\begin{eqnarray*}\notag
V_T(S_0) & \thicksim (-1)^{l_1 + l_2} e^{-r_f T} {\left(\frac{\eta N}{2 \pi} \right)}^2 e^{-\epsilon \cdot x(l)} \left[ \frac{1}{N^2} \sum_{k_1,k_2=1}^{N-1} e^{\frac{-2 \pi i k \cdot l}{N}} G(k) \right] \\
& = (-1)^{l_1 + l_2} e^{-r_f T} {\left(\frac{\eta N}{2 \pi} \right)}^2 e^{-\epsilon \cdot x(l)} \left[\textrm{ifft2}(G(k)) \right](l)
\end{eqnarray*}
 Here $\textrm{ifft2}(J)$ indicates the two dimensional inverse FFT (or any discrete Fourier transform) of $J$, and $G(k)$ is defined as:
\begin{equation*}
G(k) = (-1)^{k_1 + k_2} \phi_{X_t} (u(k) + i \epsilon) \hat{P} (u + i\epsilon)
\end{equation*}
where  $\hat{P} (u)$ is defined as the Fourier transform of the payoff function. The Fourier transform of a spread option, $\hat{H} (u)$ for the case of $K = 1$ as:
\begin{equation}\label{eq:SpreadPayoff}
\hat{H} (u) = \frac{\Gamma (i(u_1 + u_2) - 1) \Gamma(-i u_2)}{\Gamma(i u_1 + 1)}
\end{equation}
where $\Gamma(a)$ is the complex gamma function defined for $Re(a) > 0$.  \\
This method can be easily extended to the case of $K \neq 1$, $K > 0$, by simply making a change of variables. If we define the spread value when $K = 1$ by:
\begin{equation*}
Spr(S_0^{(1)}, S_0^{(2)},1) = e^{-r_f T} {\E}_{\mathbb Q} \left( {(S_T^{(1)} - S_T^{(2)} - 1)}_+ | S_0^{(1)}, S_0^{(2)} \right)
\end{equation*}
then for the case of $K \neq 1$ we can write the spread price as:
\begin{equation*}
Spr(S_0^{(1)}, S_0^{(2)},K) = e^{-r_f T} {\E}_{\mathbb Q} \left( {(S_T^{(1)} - S_T^{(2)} - K)}_+ | S_0^{(1)}, S_0^{(2)} \right) \\
\end{equation*}
If we make the change of variable, $Y_t^{(m)} = \frac{S_t^{(m)}}{K}$ for $m=1,2,\ldots,m$ then our equation becomes:

\begin{equation*}\begin{split}
 Spr(S_0^{(1)}, S_0^{(2)},K) & =  e^{-r_f T} {\E}_{\mathbb Q} \left( K \left(Y_T^{(1)} - Y_T^{(2)} - 1 \right)_+ | Y_0^{(1)}, Y_0^{(2)} \right) \\
  & = K  Spr(Y_0^{(1)}, Y_0^{(2)},K) = K  Spr \left(\frac{S_0^{(1)}}{K}, \frac{S_0^{(2)}}{K},1 \right)
\end{split}\end{equation*}
We can also take steps to ensure that both of our initial asset prices land on the inverse grid $\Gamma^*$. Standard FFT methods implement a model with equal step sizes of $\eta$ and $\eta^* =  \frac{2 \pi}{N \eta}$ along the x and y axes of the complex and real planes respectively. If we instead specify the step size along each axes of each plane (i.e. $\eta^{(1)}$, $\eta^{(2)}$ and $\eta^{* (1)}$, $\eta^{* (2)}$), we can eliminate the need for any interpolation between grid points. We can additionally specify a minimum integration interval in the complex plane $\overline{u}_{min}$ , and use the algorithm given below to find a step size size with a minimum truncation error and each initial asset price on the grid. The algorithm can be summarize as follows:

\begin{algorithm} \label{alg:grid} Algorithm for Selecting Step Size $\eta^{(m)}$ given $N$, $S_0^{(m)}$ and $\overline{u}_{min}$
\begin{enumerate}
\item Select $N$, $\bar{u}_{min}$.
\item For each asset $m = 1,2$, with initial price $S_0^{(m)}$ set log-price\\
 $X_0^{(m)} =\log S_0^{(m)}/K $, strike price K.
\quad \item Set $\bar{u}_{Test} = \frac{\pi i - N/2}{X_0^{(m)}}$.
\quad \item If $\bar{u}_{Test} > \bar{u}_{min}$ then return $\bar{u}_{Test}$.
\item Else Loop
\end{enumerate}

\end{algorithm}

\section{A jump-diffusion stochastic volatility model on the asset dynamic}
We study a class of  multivariate affine models that contains large jumps and stochastic volatility  and   derive the characteristic function of their processes using the standard procedure involving Ito formula. The class can be view as a direct generalization to the multivariate case of Bates' model, see Bates(1996). Then we implement FFT approach on two particular models.\\
Specifically we consider a d-dimensional  model for the the movement of asset prices and volatilities as :
\begin{equation}\label{eq:model}\begin{split}
dS_t^{(m)} &= S_t^{(m)} \mu^{(m)} dt + S_t^{(m)} \sigma^{(m)} \sqrt{V_t^{(m)}} dW_{t}^{S(m)} + S_{{t}^{-}}^{(m)} d\widetilde{Z_t^{(m)}} \\
dV_t^{(m)} &= \xi^{(m)}(\eta^{(m)}-V_t^{(m)})dt + \theta^{(m)} \sqrt{V_t^{(m)}} dW_t^{V(m)}
\end{split}\end{equation}
for $m=1,2, \ldots,d$, where $W_{t}^{S(m)}$ and $W_{t}^{V(m)}$ are Wiener processes driving the movement of the $m$-th asset and volatility respectively, and $\widetilde{Z_t}$ is a compound Poisson process driven by a Poisson Process $\widetilde{N_t}$ of jump intensity factor $\lambda$ and independent common jump sizes $Y_j$ distributed  log-normally, i.e. $\log Y_j \sim N(\overline{k},\Delta^2)$ for $j=1,2,\ldots,\widetilde{N_t}$.\\
We introduce a correlation between Brownian motions by:
\begin{eqnarray}
d[W^{S(m)}_t,W_t^{S(n)}] & = & \rh{S_{(m)}}{S_{(n)}} dt  \nonumber \\
d[W_t^{S(m)},W_t^{V(n)}] & = & \rh{s_{(m)}}{v_{(n)}} dt \nonumber \\
d[(W_t^{V(m)},W_t^{v(n)}] & = & \rh{v_{(m)}}{v_{(n)}} dt
\end{eqnarray}
for $n \neq m$, where $[A,B]$ denotes the quadratic covariation of $A,B$. Also we assume $\widetilde{Z_t^{(m)}}$ are independent of $W_t^{S(n)}$ and $W_t^{v(n)}$ for any  $n \neq m$.\\
Applying Ito lemma to $X_t^{(m)} = \log S_t^{(m)}$ produces:
\begin{equation}\label{eq:Model}
dX_{t}^{(m)} = (r-\lambda \overline{k^{(m)}}-\frac{1}{2} {\sigma^{(m)}}^2 V_{t}^{(m)})  dt + \sigma^{(m)} \sqrt{V_{t}^{(m)}} dW_{t}^{S(m)} + dZ_t^{(m)}
\end{equation}
for $m=1,2, \ldots,d$, where $Z_t^{(m)}$ is a compound Poisson process with distributes multivariate normal $Z_t \sim N(\overline{k},\Delta^2)$. The drift component $\mu^{(m)}$ is fixed under the risk-neutral measure to be $r-\lambda \overline{k^{(m)}}$.\\
Since the jump and the continuous components of our model are independent, the characteristic function of $X_t$ is the product of the characteristic functions of each component:
\begin{equation*}
\phi_{X_t}(u) = \phi_{X_{t}^c}(u) \phi_{Z_t}(u)
\end{equation*}
where $X_{t}^c$ is the continuous part of $X_t$.\\
As such we first consider only the continuous component of our model, where we define by $X_{t}^{(m,c)}$  its m-th component. We have then:
\begin{equation}\label{eq:contpart}
{dX_{t}^{(m,c)}} = (r-\lambda\overline{k^{(m)}}-\frac{1}{2} {\sigma^{(m)}}^2 V_{t}^{(m)})  dt + \sigma^{(m)} \sqrt{V_{t}^{(m)}} dW_{t}^{S(m)}
\end{equation}
with
\begin{equation*}
dX_{t}^{(m)} = {dX_{t}^{(m,c)}} + dZ_{t}^{(m)}
\end{equation*}
Define the function:
\begin{equation}\label{eq:marti}
f(x,v,t,u) = \E_{\mathbb{Q}} (e^{i u \cdot X_{T}^{c}} | X_{t}^{c}=x, V_t=v)
\end{equation}
By standard application of Feyman-Kac formula we have for the  model described in (\ref{eq:contpart}) the characteristic function of the continuous component of $X_t$ satisfies the following PDE:
\begin{equation}\label{eq:pdechar}\begin{split}
0 &= \pdt + \sumi{d} \left[ \pd{x}{m} \left(r-\lambda\overline{k^{(m)}} - \frac{1}{2} {\sigma^{(m)}}^2 v^{(m)}\right)
+ \pd{v}{m} \left(\xi^{(m)}(\eta^{(m)}-v^{(m)}) \right) \right] \\
& \quad +\frac{1}{2} \sumij{d} \left[ \pdm{x}{m}{x}{n} \rh{s_{(m)}}{s_{(n)}} \sigma^{(m)} \sigma^{(n)} \sqrt{v^{(m)}{v^{(n)}}} \rml
\lml + 2 \pdm{x}{m}{v}{n} \rh{s_{(m)}}{v_{(n)}} \sigma^{(m)} \theta^{(n)} \sqrt{v^{(m)}{v^{(n)}}}  \rml
\lml + \pdm{v}{m}{v}{n} \rh{v_{(m)}}{v_{(n)}}\sqrt{v^{(m)}{v^{(n)}}} \theta^{(m)} \theta^{(n)}  \right]
\end{split}\end{equation}
with terminal condition $f(x,v,T,u) = e^{i u \cdot x}$.\\
 In general equation (\ref{eq:pdechar}) is non-linear in the coefficients and a close-form solution is not available without making further simplifying assumptions.
We  consider two specific cases of our general model where an affine structure is present, and limit ourselves to two assets.\\
 In the first case, we assume no correlation between the asset prices in the continuous component (although we allow for correlation through the jumps to capture dependence) which we will refer to as the independent volatility case (in the sense that each asset has an independent volatility driving its continuous component which it is correlated with). In the second case we extend the models of Dempster and Hong(2000) where proportional volatilities are considered to include jumps, which we refer to as the proportional volatility or common volatility case.

For the independent volatility case, we make the following assumptions:
\begin{eqnarray}\label{eq:IVassum1}
&\rh{s_{(m)}}{s_{(n)}} &= 0 \qquad  \textrm{for } m \neq n, \textrm{and} 1 \textrm{ for } m = n \label{eq:IVassum2} \\
&\rh{v_{(m)}}{v_{(n)}} &= 0 \qquad  \textrm{for } m \neq n, \textrm{and}  1 \textrm{ for } m = n \label{eq:IVassum3} \\
&\rh{s_{(m)}}{v_{(n)}} &= 0 \qquad  \textrm{for } m \neq n
\end{eqnarray}
With these simplifications in mind, we can solve the PDE given in (\ref{eq:pdechar}).

\begin{theorem}\label{CFIV}
For the market model described in (\ref{eq:model}), with the assumptions given in (\ref{eq:IVassum1}-\ref{eq:IVassum3}) the characteristic function of $X_t$, $0 \leq t \leq T$ is given by:
\begin{eqnarray*}\label{eq:charIV} \nonumber
\phi_{X_t}(u) &=& \phi_{X_{t}^c}(u) \phi_{Z_t}(u) \\
\textrm{where} \nonumber \\
\phi_{X_{t}^c}(u) &=& e^{i u \cdot X_0 + C(T-t) + V_0 \cdot D(T-t)} \nonumber \\
\phi_{Z_t}(u) &=& e^{t \lambda (exp(i u^T \overline{k} - \frac{1}{2} u^T \Delta u) - 1)} \nonumber \\
D_{(m)}(s) &=& \frac{2 \zeta^{(m)} (1 - e^{-\gamma^{(m)} s})}{2 \gamma^{(m)} - (\gamma^{(m)} - \omega^{(m)}) (1- e^{-\gamma^{(m)} s})} \nonumber \\
C(s) &=&  \sumi{2} \Big( i u^{(m)} \big(r-\lambda \overline{k^{(m)}}\big) \Big) s \nonumber \\
 &-& \frac{\xi^{(m)} \eta^{(m)} }{{\theta^{(m)}}^2} \left[2 \ln \left(\frac{2 \gamma^{(m)} - (\gamma^{(m)} - \omega^{(m)})(1- e^{-\gamma^{(m)} s})}{2 \gamma^{(m)}} \right) + (\gamma^{(m)} - \omega^{(m)}) s \right] \quad  \nonumber \\
\zeta^{(m)} &=&  -\frac{1}{2} {\theta^{(m)}}^2 (i u^{(m)} \sigma^{(m)} + {u^{(m)}}^2 {\sigma^{(m)}}^2 )  \nonumber \\
\omega^{(m)} &=&  \xi^{(m)} - i \theta^{(m)} \sigma^{(m)} \rh{s_{(m)}}{v_{(m)}} u^{(m)} \nonumber \\
\gamma^{(m)} &=& \sqrt{ {\omega^{(m)}}^2 - 2 {\theta^{(m)}}^2 \zeta^{(m)}} \nonumber \\
\end{eqnarray*}
where $D_{(m)},\zeta^{(m)},\omega^{(m)}$ and $\gamma^{(m)}$ are respectively the m-th components of vectors $D, \zeta, \omega$ and $\gamma$.
\end{theorem}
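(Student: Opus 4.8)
\section*{Proof proposal}

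The plan is to exploit the affine structure produced by the independence assumptions and reduce the Feynman--Kac PDE (\ref{eq:pdechar}) to a decoupled system of ordinary differential equations. Since the jump component and the continuous component of $X_t$ are independent, the characteristic function factors as $\phi_{X_t}(u)=\phi_{X_{t}^c}(u)\,\phi_{Z_t}(u)$, so the two factors are handled separately. The jump factor is immediate: $Z_t$ is a compound Poisson process whose increments are multivariate normal with $\log Y_j\sim N(\overline{k},\Delta^2)$, so by the L\'evy--Khintchine representation of a compound Poisson process $\phi_{Z_t}(u)=\exp\!\big(t\lambda(\E_{\mathbb Q}[e^{iu\cdot \log Y}]-1)\big)$, and inserting the Gaussian characteristic function $\E_{\mathbb Q}[e^{iu\cdot \log Y}]=\exp(iu^{T}\overline{k}-\tfrac12 u^{T}\Delta u)$ yields exactly the stated $\phi_{Z_t}$. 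The work is therefore concentrated in the continuous factor.

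For $\phi_{X_{t}^c}$ I would solve the PDE (\ref{eq:pdechar}) for $f$ defined in (\ref{eq:marti}) with the exponential-affine ansatz
\[
f(x,v,t,u)=\exp\Big(iu\cdot x + C(s) + \textstyle\sum_{m} v^{(m)} D_{(m)}(s)\Big),\qquad s=T-t,
\]
motivated by the terminal condition $f(x,v,T,u)=e^{iu\cdot x}$, which forces the initial data $C(0)=0$ and $D_{(m)}(0)=0$. Computing the derivatives gives $\partial f/\partial t = f\big(-C'-\sum_m v^{(m)}D_{(m)}'\big)$, $\partial f/\partial x^{(m)}=iu^{(m)}f$, $\partial f/\partial v^{(m)}=D_{(m)}f$, together with the second-order terms $-u^{(m)}u^{(n)}f$, $iu^{(m)}D_{(n)}f$ and $D_{(m)}D_{(n)}f$. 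Substituting into (\ref{eq:pdechar}) and dividing by $f\neq 0$, the independent-volatility assumptions (\ref{eq:IVassum1}--\ref{eq:IVassum3}) annihilate every off-diagonal $m\neq n$ contribution in the double sums, while the diagonal price--volatility correlation $\rh{s_{(m)}}{v_{(m)}}$ survives, so the equation decouples into a sum over the single asset index $m$ of Heston-type expressions.

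Collecting terms by their dependence on $v^{(m)}$ then splits the identity into two parts. The coefficient of each $v^{(m)}$ must vanish, producing for every $m$ a Riccati equation $D_{(m)}'=\tfrac12{\theta^{(m)}}^2 D_{(m)}^2-\omega^{(m)} D_{(m)}+\zeta^{(m)}$ whose linear and constant coefficients are precisely the quantities $\omega^{(m)}$ and $\zeta^{(m)}$ defined in the statement; the $v$-independent terms give $C'(s)=\sum_{m}\big(iu^{(m)}(r-\lambda\overline{k^{(m)}})+\xi^{(m)}\eta^{(m)}D_{(m)}(s)\big)$. The Riccati equation is solved in closed form by the standard substitution linearizing it to a constant-coefficient second-order ODE whose characteristic roots are governed by the discriminant $\gamma^{(m)}=\sqrt{{\omega^{(m)}}^2-2{\theta^{(m)}}^2\zeta^{(m)}}$; imposing $D_{(m)}(0)=0$ selects the stated expression for $D_{(m)}(s)$. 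Finally $C(s)$ is recovered by direct integration: the linear-in-$u$ piece integrates to the $iu^{(m)}(r-\lambda\overline{k^{(m)}})\,s$ term, and integrating $\xi^{(m)}\eta^{(m)}D_{(m)}(s)$ produces the logarithmic term, with $C(0)=0$ fixing the constant. Evaluating $f$ at the initial state $(X_0,V_0)$ gives the claimed $\phi_{X_{t}^c}$.

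The main obstacle is the closed-form integration of the Riccati solution $D_{(m)}$ needed to obtain $C(s)$: one must recognize that $\int_0^s D_{(m)}(r)\,dr$ collapses to the logarithm of the Riccati denominator and then reconcile the resulting algebra --- in particular the signs and the combination $2\ln(\cdots)+(\gamma^{(m)}-\omega^{(m)})s$ --- with the stated $C(s)$. Care is likewise required in matching the coefficients generated by the substitution against the precise definitions of $\zeta^{(m)}$, $\omega^{(m)}$ and $\gamma^{(m)}$, since the verification that $\gamma^{(m)}$ is exactly the discriminant of the Riccati quadratic is what guarantees the branch-consistent closed form; this bookkeeping, rather than any conceptual difficulty, is where errors are easiest to introduce.
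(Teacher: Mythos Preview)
Your proposal is correct and follows essentially the same route as the paper: reduce (\ref{eq:pdechar}) via the independence assumptions, apply the exponential--affine ansatz $f=\exp(iu\cdot x+C(T-t)+v\cdot D(T-t))$, separate by powers of $v^{(m)}$ into decoupled Heston-type Riccati equations for $D_{(m)}$ plus a quadrature for $C$, and use L\'evy--Khintchine for the jump factor. If anything, your outline is more explicit than the paper's own proof about how the Riccati discriminant produces $\gamma^{(m)}$ and how integrating $\xi^{(m)}\eta^{(m)}D_{(m)}$ yields the logarithmic term in $C(s)$.
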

\begin{proof} Our assumptions reduce the equation given in (\ref{eq:pdechar}) to:
\begin{equation*}\label{pdeindvol}\begin{split}
0 &= \sumi{2} \left[ \pd{x}{m} \left(r-\lambda \overline{k^{(m)}}-\frac{1}{2} {\sigma^{(m)}}^2 \vi \right)
+ \pd{v}{m}\left(\xi^{(m)}(\eta^{(m)}-v^{(m)}) \right) \rml
\lml + \frac{1}{2} \left(\pds{x}{m} {\sigma^{(m)}}^2 \vi + 2\pdm{x}{m}{v}{m} \rh{s_{(m)}}{v_{(m)}}  \sigma^{(m)} \vi \theta^{(m)} + \pds{v}{m} \vi {\theta^{(m)}}^2  \right) \right] + \pdt
\end{split}\end{equation*}
We now guess a solution of the form:
\begin{equation*}
f(x,v,t,u) = e^{i u \cdot x + C(T-t) + v \cdot D(T-t)}
\end{equation*}
where $C(T-t)$ and
\begin{equation*}
D(T-t) =
\left[ \begin{array}{c}
D_1 (T-t) \\
D_2 (T-t) \\
\end{array} \right]
\end{equation*}
are functions of $t$ alone. Applying our guess it gives us:
\begin{equation*}\label{INSERT}\begin{split}
\frac{dC}{dt} + \sumi{2} \frac{dD_{(m)}}{dt} \vi &= \sumi{2} \left[ \bigg(iu^{(m)} \Big(r-\lambda \overline{k^{(m)}}-\frac{1}{2} {\sigma^{(m)}}^2 \vi \Big)\bigg) \rml
\lml + D_{(m)} \xi^{(m)}(\eta^{(m)}-v^{(m)}) + \frac{1}{2} \left(-{u^{(m)}}^2 {\sigma^{(m)}}^2 \vi \right. \rml
\lml \left. + D_{(m)}^2 \vi {\theta^{(m)}}^2 + 2(i u^{(m)} D_{(m)} \rh{s_{(m)}}{v_{(m)}} \sigma^{(m)} \vi \theta^{(m)}) \right) \right]
\end{split}\end{equation*}
 This produces a series of Riccati ODE's for $m=1,2$:
\begin{eqnarray*}\label{ode}
 \frac{dD_{(m)}}{dt}&=&  -\frac{1}{2} {\sigma^{(m)}}^2 \left(i u^{(m)}  + {u^{(m)}}^2\right) - \left(\xi^{(m)} - i u^{(m)} \rh{s_{(m)}}{v_{(m)}} \sigma^{(m)} \theta^{(m)}\right)D_{(m)} \nonumber\\
&+& \frac{1}{2}{\theta^{(m)}}^2 {D_{(m)}}^2 \\
\textrm{for } m &=& 1,2 \textrm{, and} \nonumber\\
\frac{dC}{dt} &=& \sumi{2}\bigg(\Big(i u^{(m)} \big(r-\lambda \overline{k^{(m)}}\big)\Big) + \xi^{(m)} \eta^{(m)} D_{(m)}(s)\bigg)
\end{eqnarray*}
which, following from the terminal condition on equation (\ref{eq:pdechar}), have initial conditions $D_{(m)}(0) = 0$ and $C(0) = 0$. The solutions to these equations are given in Theorem 3.2 above. It should be noted that because the continuous components of the asset prices are independent of each other, the characteristic function given above can also be written as:
\begin{equation*}
\phi_{X_t}(u) = \prod_{m=1}^d \phi_{X_{t}^{(m,c)}}(u)  \phi_{Z_t}(u)
\end{equation*}
where $\phi_{X_{t}^{(m),c}}(u)$ is the characteristic function of the continuous component of the $m^{th}$ asset, as given in Albrecher et al.(2007).\\
Looking at the jump component, we know from the L\'{e}vy–Khinchine formula that its characteristic function is given by:
\begin{equation*}
\phi_{Z_t}(u) = e^{t \lambda (exp(i u' \overline{k} - \frac{1}{2} u' \Delta u) - 1)}
\end{equation*}
where $'$ indicates the transpose operator.
\end{proof}
We now consider the case of Proportional Stochastic Volatilities. In this situation we require only one volatility process for all of the assets and use the parameter $\sigma^{(m)}$ to allow for varying volatilities between the assets. Our model for the log-prices and volatility is thus:
\begin{eqnarray} \label{eq:commonvols}
dX_{t}^{(m)} &=& {dX_{t}^{(m,c)}} + dZ_{t}^{(m)} \nonumber \\
{dX_{t}^{(m,c)}} &=& (r - \lambda \overline{k^{(m)}}-  \frac{1}{2} {\sigma^{(m)}}^2 V_{t})  dt + \sigma^{(m)} \sqrt{V_{t}} dW_{t}^{S(m)} \nonumber \\
dV_t &=& \xi(\eta-V_t)dt + \theta \sqrt{V_t} dW_t^{V}
\end{eqnarray}
By a similar procedure than in Theorem \ref{CFIV} we have:
\begin{theorem}\label{CFIV}
For the market model described in (\ref{eq:commonvols}), the characteristic function of $X_t$ is given by:
\begin{eqnarray*}
\phi_{X_t}(u) &=& \phi_{X_{t}^c}(u) \phi_{Z_t}(u) \\
\textrm{where} \nonumber \\
\phi_{X_{t}^c}(u) &=& e^{i u \cdot X_0 + C(T-t) + V_0 D(T-t)}  \\
D(s) &=& \frac{2 \zeta (1 - e^{-\gamma s})}{2 \gamma - (\gamma - \omega) (1- e^{-\gamma s})} \nonumber \\
C(s) &=& \Big( \sumi{d} i u^{(m)} \big(r-\lambda \overline{k^{(m)}}\big) \Big) s - \frac{\xi \eta }{\theta^2} \left[2 \ln \left(\frac{2 \gamma - (\gamma - \omega)(1- e^{-\gamma s})}{2 \gamma} \right) + (\gamma - \omega) s \right] \quad  \nonumber \\
\zeta &=&  -\frac{1}{2} \Big[ \sumi{d} i {\sigma^{(m)}}^2 u^{(m)} + \sumij{d} \sigma^{(m)} \sigma^{(n)} u^{(m)} u^{(n)} \rh{s_{(m)}}{s_{(n)}} \Big] \nonumber \\
\omega &=&  \xi - i \theta (\sumi{2} \rh{s_{(m)}}{v} u^{(m)}) \nonumber \\
\gamma &=& \sqrt{ \omega^2 - 2 \theta^2 \zeta} \nonumber \\
\end{eqnarray*}
with $\phi_{Z_t}(u)$ defined as in Theorem 3.2 above.
\end{theorem}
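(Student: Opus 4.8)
The plan is to follow the Feynman--Kac / affine-ansatz route used for the independent-volatility theorem, the only structural novelty being that a single scalar volatility $v$ now drives both assets, so the exponent of the ansatz carries a scalar coefficient $D(s)$ rather than a vector. First I would specialize the general pricing PDE (\ref{eq:pdechar}) to the dynamics in (\ref{eq:commonvols}): setting $v^{(m)}=v$ for both $m$ (so that $\sqrt{v^{(m)}v^{(n)}}=v$) and collapsing $\xi^{(m)},\eta^{(m)},\theta^{(m)}$ to the common $\xi,\eta,\theta$ reduces it to a PDE in $x^{(1)},x^{(2)},v,t$,
\begin{equation*}\begin{split}
0 &= \pdt + \sumi{2}\pd{x}{m}\Big(r-\lambda\overline{k^{(m)}}-\tfrac12{\sigma^{(m)}}^2 v\Big) + \pdv\,\xi(\eta-v) \\
&\quad + \tfrac12\sumij{2}\pdm{x}{m}{x}{n}\rh{s_{(m)}}{s_{(n)}}\sigma^{(m)}\sigma^{(n)}v + \sumi{2}\pdmv{x}{m}\rh{s_{(m)}}{v}\sigma^{(m)}\theta v + \tfrac12\pdsv\,\theta^2 v.
\end{split}\end{equation*}
The decisive difference from the independent-volatility case is that the asset--asset correlations $\rh{s_{(m)}}{s_{(n)}}$ are no longer set to zero, so the cross second-derivatives in $x$ survive and couple the two log-prices.

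Next I would substitute the affine guess $f(x,v,t,u)=e^{iu\cdot x+C(T-t)+vD(T-t)}$ with scalar $C,D$ and $C(0)=D(0)=0$. Writing $s=T-t$ and dividing by $f$, each derivative becomes a polynomial in $D$ (so that $\pdv/f=D$, $\pdmv{x}{m}/f=iu^{(m)}D$, $\pdsv/f=D^2$, and $\pdm{x}{m}{x}{n}/f=-u^{(m)}u^{(n)}$), and after collecting the terms of order $v^0$ and $v^1$ I expect to obtain a single scalar Riccati equation together with a quadrature for $C$,
\begin{equation*}\begin{split}
\frac{dD}{ds} &= \tfrac12\theta^2 D^2 - \omega D + \zeta, \\
\frac{dC}{ds} &= \sumi{2} iu^{(m)}\big(r-\lambda\overline{k^{(m)}}\big) + \xi\eta\,D(s),
\end{split}\end{equation*}
where matching the $v^1$ coefficient forces $\zeta = -\tfrac12\big[\sumi{2} i{\sigma^{(m)}}^2 u^{(m)} + \sumij{2}\sigma^{(m)}\sigma^{(n)}u^{(m)}u^{(n)}\rh{s_{(m)}}{s_{(n)}}\big]$ and $\omega = \xi - i\theta\sumi{2}\rh{s_{(m)}}{v}\sigma^{(m)}u^{(m)}$, precisely the constants listed in the statement. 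The point I would stress is that, although both assets are now coupled through the shared $v$ and through the nonzero $\rh{s_{(m)}}{s_{(n)}}$, all of that coupling is absorbed into the two scalars $\zeta$ and $\omega$; this is exactly why one still obtains a single Riccati ODE rather than a genuine system, and it is the crux of the argument.

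Finally I would integrate the two ODEs. The Riccati equation has constant coefficients and discriminant $\omega^2-2\theta^2\zeta=\gamma^2$; linearizing via $D=-\tfrac{2}{\theta^2}(w'/w)$ turns it into $w''+\omega w'+\tfrac12\theta^2\zeta w=0$ with characteristic roots $(-\omega\pm\gamma)/2$, and imposing $D(0)=0$ yields the closed form $D(s)=\frac{2\zeta(1-e^{-\gamma s})}{2\gamma-(\gamma-\omega)(1-e^{-\gamma s})}$. Substituting this into the $C$-equation, the linear-in-$s$ drift term integrates at once while $\xi\eta\int_0^s D(\tau)\,d\tau$ produces the logarithmic term displayed for $C(s)$. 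The jump factor requires no new work: by independence of the jump and continuous parts $\phi_{X_t}=\phi_{X_t^c}\phi_{Z_t}$, and the L\'evy--Khinchine formula for a compound Poisson process with $N(\overline k,\Delta^2)$ jump sizes gives $\phi_{Z_t}(u)=e^{t\lambda(\exp(iu'\overline k-\frac12 u'\Delta u)-1)}$, exactly as in the independent-volatility theorem. The main obstacle is therefore not analytic difficulty but the careful bookkeeping of the surviving cross-asset terms into the scalars $\zeta$ and $\omega$; once the reduction to a single scalar Riccati equation is secured, the remaining integrations are entirely standard.
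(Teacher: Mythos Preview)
Your proposal is correct and follows exactly the route the paper indicates (the paper's entire proof here is the sentence ``by a similar procedure than in Theorem~\ref{CFIV}''): specialize the PDE~(\ref{eq:pdechar}) to a single scalar $v$, substitute the affine ansatz with scalar $D$, read off the scalar Riccati ODE for $D$ and the quadrature for $C$, and invoke L\'evy--Khinchine for the jump factor. Note only that your cross-term bookkeeping yields $\omega=\xi-i\theta\sum_m\rh{s_{(m)}}{v}\sigma^{(m)}u^{(m)}$, with the $\sigma^{(m)}$ factor present exactly as in the independent-volatility theorem, whereas the printed statement omits it; this is a slip in the paper's statement, not in your derivation.
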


\section{Numerical Computation of Spread Option Prices}
We present our numerical results. In the first part of this section we show the analyze  the implementation of algorithm described in (\ref{alg:grid}) to price spreads under both models considered in the previous section, while in the second part we provide a sensitivity analysis of spread prices with respect to some parameters in our models.
\subsection{A comparison between FFT and Monte Carlo prices}
Table \ref{tbl:errs} compares the results produced by Monte Carlo simulation with the results from a FFT method applied to the proportional volatility case with various strikes as given in its first column. We use a maturity equal to one year. The second column gives the prices obtained from one million Monte Carlo simulations of 2000 time steps each, while the subsequent columns present the relative error, measured in percent,  obtained using the FFT method for various values of the discretization points in the grid. We use the following benchmark set of parameters:\\
$S_0^{(1)} = 100$ , $S_0^{(2)} = 96$, $\sigma^{(1)} = 1$, $\sigma^{(2)} = 0.5$, $\xi = 1$, $\eta = 0.04$, $\theta = 0.05$, $V_0 = 0.04$, $\lambda = 1$, $\overline{k^{(1)}} = \overline{k^{(2)}} = 0.05$, $\delta^{(1)} = \delta^{(2)} = 0.05$, $\rh{S(1)}{S(2)} = 0.5$, $\rh{S(1)}{V} = -0.5$, $\rh{S(2)}{V} = 0.25$, $r_f = 0.1$.

\begin{table}[htb!]
\caption{Comparison of Monte Carlo method with FFT method for proportional volatility case for $\overline{u}_{min}$ = 20 under the benchmark set of parameters.}

\label{tbl:errs}
\begin{tabular}{|l|l|l|l|l|l|}
\hline
\textbf{K} & \textbf{MC} & \textbf{128} & \textbf{256} & \textbf{512} & \textbf{1024} \\ \hline
\textbf{2} & 8.359781 & -0.008902 & -0.008902 & -0.008855 & -0.008902 \\ \hline
\textbf{2.2} & 8.264856 & -0.009001 & -0.009001 & -0.008967 & -0.009 \\ \hline
\textbf{2.4} & 8.170669 & -0.009127 & -0.009127 & -0.009081 & -0.009127 \\ \hline
\textbf{2.6} & 8.063694 & -0.007571 & -0.00757 & -0.007533 & -0.00757 \\ \hline
\textbf{2.8} & 7.984489 & -0.009357 & -0.009356 & -0.009308 & -0.009356 \\ \hline
\textbf{3} & 7.879148 & -0.007781 & -0.00778 & -0.007743 & -0.00778 \\ \hline
\textbf{3.2} & 7.787975 & -0.007905 & -0.007905 & -0.007849 & -0.007905 \\ \hline
\textbf{3.4} & 7.697545 & -0.008 & -0.008 & -0.007956 & -0.008 \\ \hline
\textbf{3.6} & 7.633466 & -0.011431 & -0.011431 & -0.011392 & -0.011431 \\ \hline
\textbf{3.8} & 7.544433 & -0.011586 & -0.011586 & -0.011531 & -0.011586 \\ \hline
\textbf{4} & 7.456122 & -0.010902 & -0.010902 & -0.010856 & -0.010901 \\ \hline
\end{tabular}
\end{table}
As we observe the FFT method provides an accurate value for the price of a spread option under the benchmark parameter set considered, although consistently biased low of Monte Carlo. Similar results regarding speed and accuracy are obtained for the model of independent volatilities.\\
In both cases it is interesting to note that an increasing in the number of points taken in the grid, after 128 points, does not result in the corresponding error reduction.\\
In addition to the accuracy of the method, it is also useful to compare the computational effort required to price spread options under each method. We can compare the execution times for the FFT method for both the proportional volatility case and the independent volatility case, as well as for  Monte Carlo simulations, which are given in table \ref{tbl:times}.
\begin{table}[htb!]
\begin{center}
  \caption{Run times, in seconds, for FFT vs. Monte Carlo Simulation, $\bar{u}_{min} = 20$ and under the benchmark set of parameters.}
  \label{tbl:times}
\begin{tabular}{|l|l|l|}
  \hline
   \textbf{Grid Size} &  \textbf{proportional volatility} & \textbf{Independent Volatilities} \\
   \hline
  64 &  0.020426  & 0.054677  \\
  \hline
  128 &  0.050574 & 0.084188  \\
  \hline
  256 & 0.233219   & 0.346391   \\
  \hline
  512 &  0.997439 & 1.454524   \\
  \hline
  1024 &  3.970557 &  5.920600 \\
  \hline
  MC &  1368.67 & 1503.42   \\
  \hline
  \end{tabular}
  \end{center}
\end{table}
This is where the benefits of the FFT method become more clear, as it largely outperforms the Monte Carlo method. Moreover, because of the additional volatility introduced when we add a jump component to our model, the Monte Carlo method is very slow to converge and requires a high number of simulations and a fine grid to accurately generate a price. The proportional volatility Case also outperforms the independent volatility one, due mainly to the fact that  the later case essentially requires the calculation of three characteristic functions (one for each asset-volatility pair, and one for the correlated jumps), while our proportional volatility case requires only two (one for the continuous component, and one for the jumps).
\\
Finally, we compare the FFT prices produced by the proportional volatility and independent volatility Cases. Fixing all the parameters but those leading to the correlation, since this is essentially the area where the two models differ, Figure \ref{ch:IndVsProp} shows the difference in prices by the magnitude $P_{prop} - P_{ind}$, i.e. the difference between the price in the proportional volatility and the independent cases, for pairs of  correlation between assets. Note that while the independent case does not have correlation between the Wiener processes driving the continuous component, it does have correlation in the jump components. The results appear as we would expect them to, with the independent volatility case producing a higher prices when the correlation in the proportional volatility case is high, while the reverse also holds when the correlation in the proportional volatility case is low. It should be noted that we see a much greater difference in prices when we decrease the Common Volatility correlation, showing the price in the Common Volatility model is much more dependent on the correlation. Furthermore the latter changes in a more pronounced non-linear manner.

\begin{figure}[htb!]
    \begin{center}
        \includegraphics[width=\textwidth]{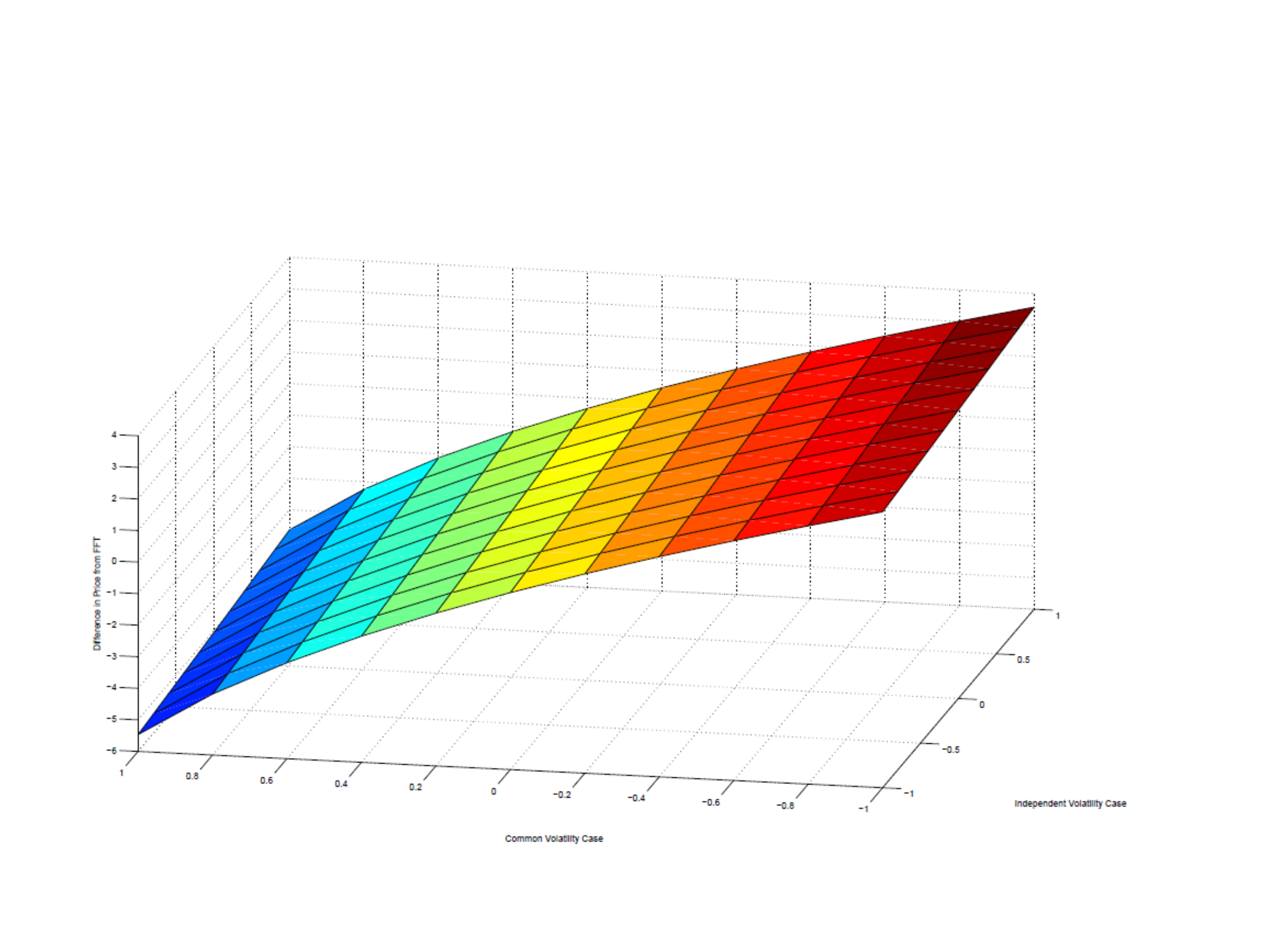}
        \caption{Price difference between proportional volatility model and independent volatility model with asset correlation $\rh{s(1)}{s(2)}$}
        \label{ch:IndVsProp}
    \end{center}
\end{figure}

\subsection{Parameter Sensitivities}

We now consider just the proportional volatility case, and examine the sensitivities of the price to the various parameters in our model.  We fix the benchmark set of parameters as in the previous subsection, additionally we select a grid size of $N = 512$, damping parameter $\epsilon = (-3,1)$, and a minimum truncation interval of $\overline{u}_{min} = 40$, with the actual truncation interval selected using Algorithm (\ref{alg:grid}).
\begin{figure}[htb!]
    \begin{center}
        \includegraphics[width=\textwidth]{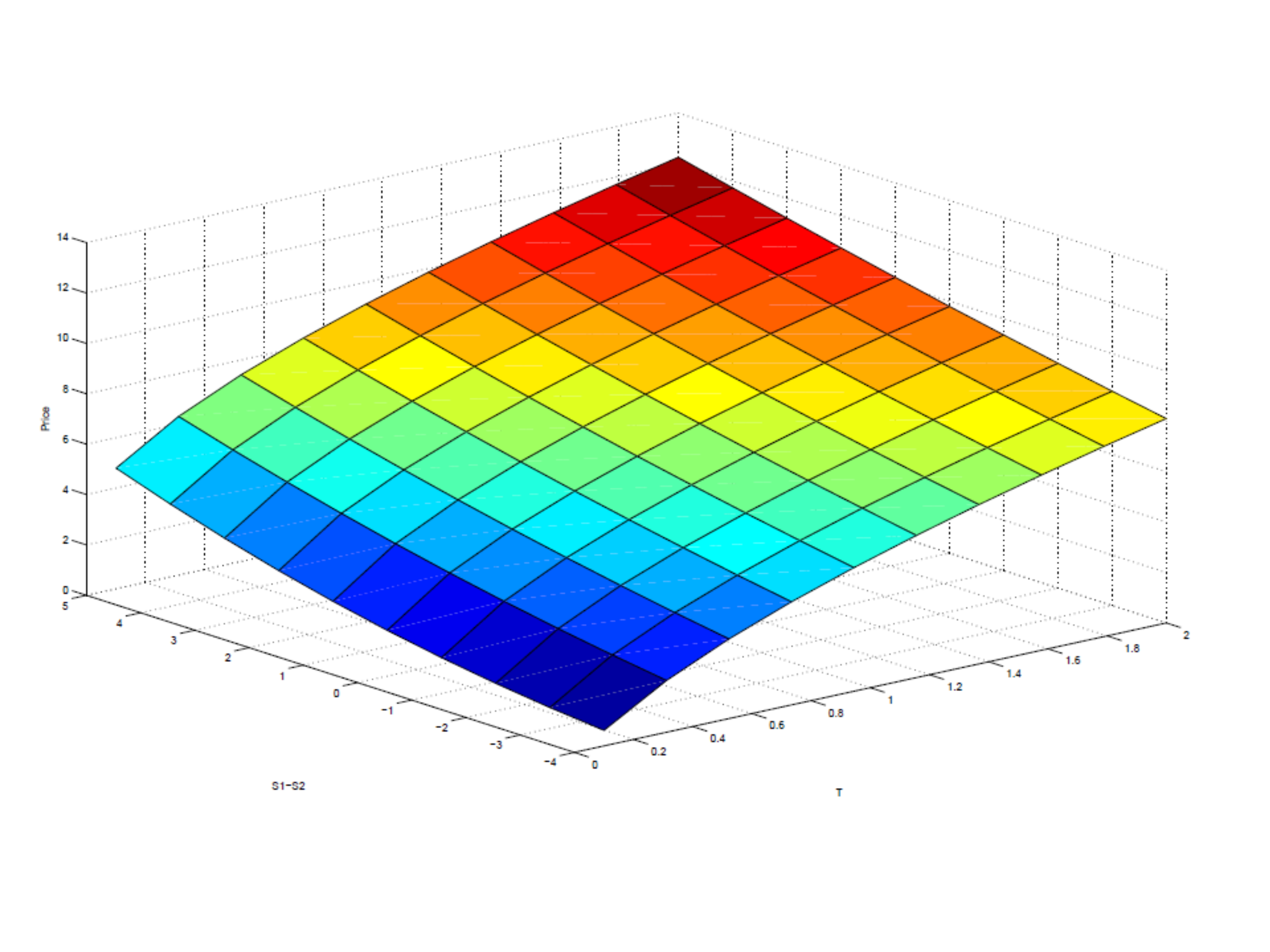}
                \caption{Price with Variation in Moneyness and Time to Maturity under the benchmark set of parameters}

        \label{ch:moneymaturity}
    \end{center}
\end{figure}

Figure \ref{ch:moneymaturity} and Table \ref{tbl:moneyT} show the price as we vary both the initial moneyness of our option and the time to maturity. Obviously we would expect the price of the option to increase as the option ranges from out-of-the-money to in the money, and also to increase as the time to maturity rises, both facts which we observe. The data shows that a large variation based on the moneyness and time to maturity, and the plot also shows that at shorter maturities the effect of increasing the moneyness of the option is much greater, which we would again expect.
\begin{table}[htb!]
\caption{Comparison of Prices for case for Variation in Moneyness and Time to Maturity $T$}
\label{tbl:moneyT}
\begin{tabular}{|r|l|l|l|l|l|l|l|l|l|l|}
\hline
\textbf{T} & \textbf{0.1} & \textbf{0.31} & \textbf{0.52} & \textbf{0.73} & \textbf{0.94} & \textbf{1.16} & \textbf{1.37} & \textbf{1.58} & \textbf{1.79} & \textbf{2} \\
$S_0^{(1)}-S_0^{(2)}$ &&&&&&&&&&\\ \hline
\textbf{5} & 4.79 & 6.3 & 7.41 & 8.33 & 9.13 & 9.86 & 10.52 & 11.13 & 11.71 & 12.25 \\ \hline
\textbf{4} & 4.06 & 5.67 & 6.82 & 7.76 & 8.57 & 9.3 & 9.97 & 10.6 & 11.18 & 11.73 \\ \hline
\textbf{3} & 3.4 & 5.08 & 6.25 & 7.21 & 8.03 & 8.78 & 9.45 & 10.08 & 10.67 & 11.22 \\ \hline
\textbf{2} & 2.8 & 4.53 & 5.72 & 6.69 & 7.52 & 8.27 & 8.95 & 9.58 & 10.17 & 10.73 \\ \hline
\textbf{1} & 2.28 & 4.02 & 5.22 & 6.19 & 7.03 & 7.78 & 8.46 & 9.1 & 9.69 & 10.26 \\ \hline
\textbf{0} & 1.82 & 3.55 & 4.75 & 5.72 & 6.56 & 7.31 & 8 & 8.64 & 9.23 & 9.8 \\ \hline
\textbf{-1} & 1.43 & 3.12 & 4.31 & 5.27 & 6.11 & 6.87 & 7.55 & 8.19 & 8.79 & 9.35 \\ \hline
\textbf{-2} & 1.1 & 2.73 & 3.9 & 4.86 & 5.69 & 6.44 & 7.13 & 7.76 & 8.36 & 8.92 \\ \hline
\textbf{-3} & 0.83 & 2.37 & 3.51 & 4.46 & 5.29 & 6.03 & 6.72 & 7.35 & 7.95 & 8.51 \\ \hline
\textbf{-4} & 0.62 & 2.05 & 3.16 & 4.09 & 4.91 & 5.65 & 6.32 & 6.96 & 7.55 & 8.11 \\ \hline
\end{tabular}
\end{table}

\begin{figure}[htb!]
    \begin{center}
        \includegraphics[width=\textwidth]{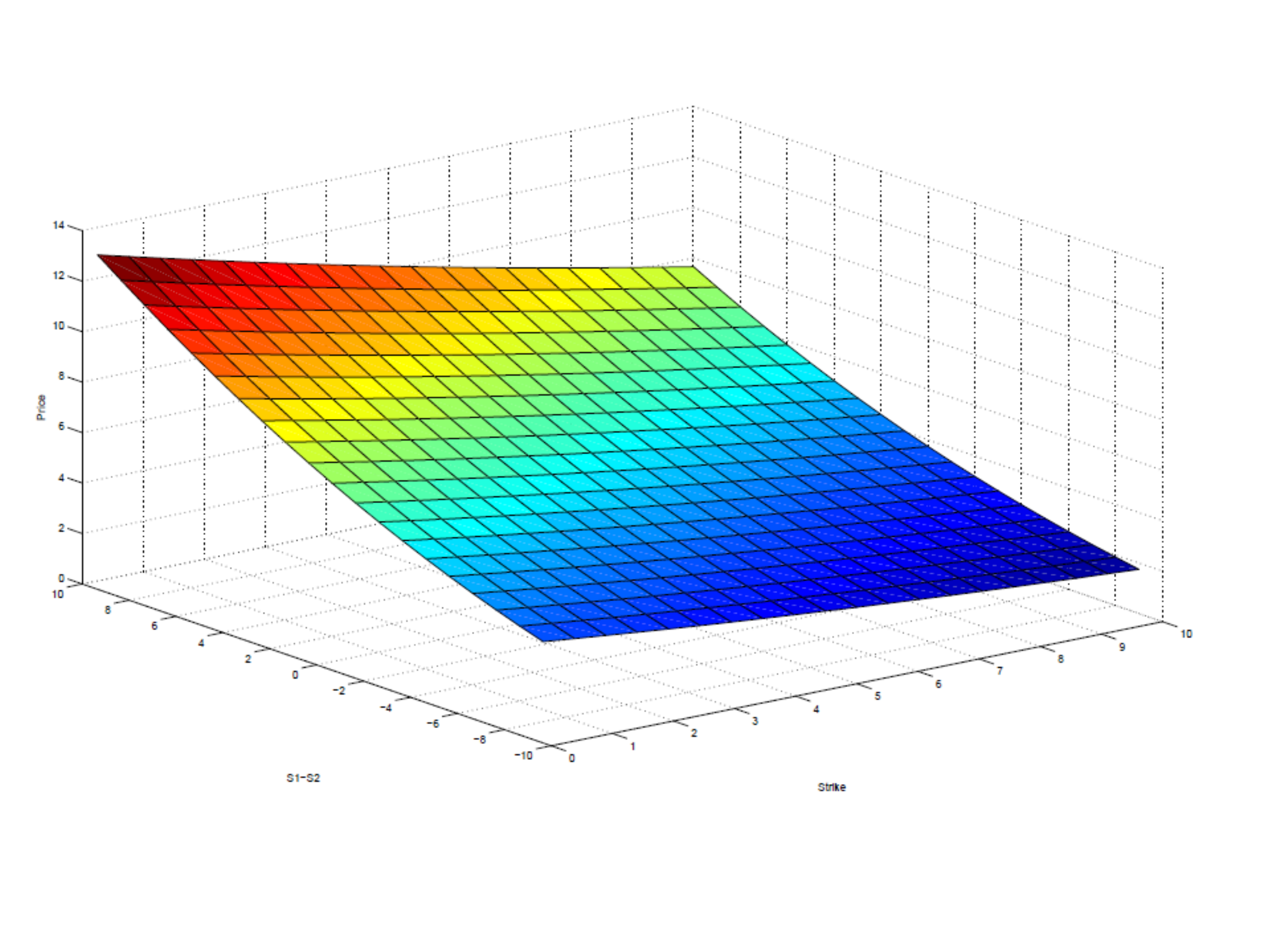}
        \caption{Price with Variation in Moneyness and Strike, the remaining parameters belong to the benchmark setting}

        \label{ch:moneystrike}
    \end{center}
\end{figure}

In Figure \ref{ch:moneystrike} we look at how moneyness and strike affect the price of the our spread option. We again observe a price increasing as both the moneyness increases, and as the strike decreases.
\begin{table}[htb!]
\caption{Comparison of Prices for case for Variation in Moneyness and Strike $K$,  the remaining parameters belong to the benchmark setting}
\label{tbl:moneyK}
\begin{tabular}{|l|l|l|l|l|l|l|l|l|l|l|}
\hline
\textbf{$S^{(1)}-S^{(2)}$/$K$} & \textbf{0.25} & \textbf{1.28} & \textbf{2.3} & \textbf{3.33} & \textbf{4.36} & \textbf{5.38} & \textbf{6.41} & \textbf{7.43} & \textbf{8.46} & \textbf{9.49} \\ \hline
\textbf{5} & 12.92 & 12.28 & 11.67 & 11.07 & 10.49 & 9.93 & 9.39 & 8.87 & 8.37 & 7.89 \\ \hline
\textbf{4} & 11.57 & 10.98 & 10.4 & 9.84 & 9.31 & 8.79 & 8.29 & 7.81 & 7.35 & 6.91 \\ \hline
\textbf{3} & 10.32 & 9.76 & 9.23 & 8.71 & 8.22 & 7.74 & 7.28 & 6.85 & 6.43 & 6.03 \\ \hline
\textbf{2} & 9.15 & 8.64 & 8.14 & 7.67 & 7.21 & 6.78 & 6.36 & 5.97 & 5.59 & 5.23 \\ \hline
\textbf{1} & 8.07 & 7.6 & 7.15 & 6.71 & 6.3 & 5.91 & 5.53 & 5.17 & 4.83 & 4.51 \\ \hline
\textbf{0} & 7.08 & 6.65 & 6.24 & 5.85 & 5.47 & 5.12 & 4.78 & 4.46 & 4.16 & 3.87 \\ \hline
\textbf{-1} & 6.18 & 5.79 & 5.42 & 5.07 & 4.73 & 4.41 & 4.11 & 3.83 & 3.56 & 3.31 \\ \hline
\textbf{-2} & 5.37 & 5.02 & 4.68 & 4.37 & 4.07 & 3.79 & 3.52 & 3.27 & 3.03 & 2.81 \\ \hline
\textbf{-3} & 4.64 & 4.32 & 4.02 & 3.74 & 3.48 & 3.23 & 3 & 2.77 & 2.57 & 2.37 \\ \hline
\textbf{-4} & 3.98 & 3.7 & 3.44 & 3.19 & 2.96 & 2.74 & 2.54 & 2.34 & 2.16 & 2 \\ \hline
\end{tabular}
\end{table}

We also notice that for a given strike the change in price is non-linear as a function of the moneyness, an indication of a skew in the implied volatilities. While the calculation of the implied volatilities and correlations is beyond the scope of this paper, it should be noted that the skews and smiles in univariate models tend to occur because the market is overestimating the volatility of away from the money options. Thus the non-linearity we see in the price can be taken as evidence of the presence of a skew in our implied volatilities and correlations, as we would expect with both a stochastic volatility and jump component.
\begin{figure}[htb!]
    \begin{center}
        \includegraphics[width=\textwidth]{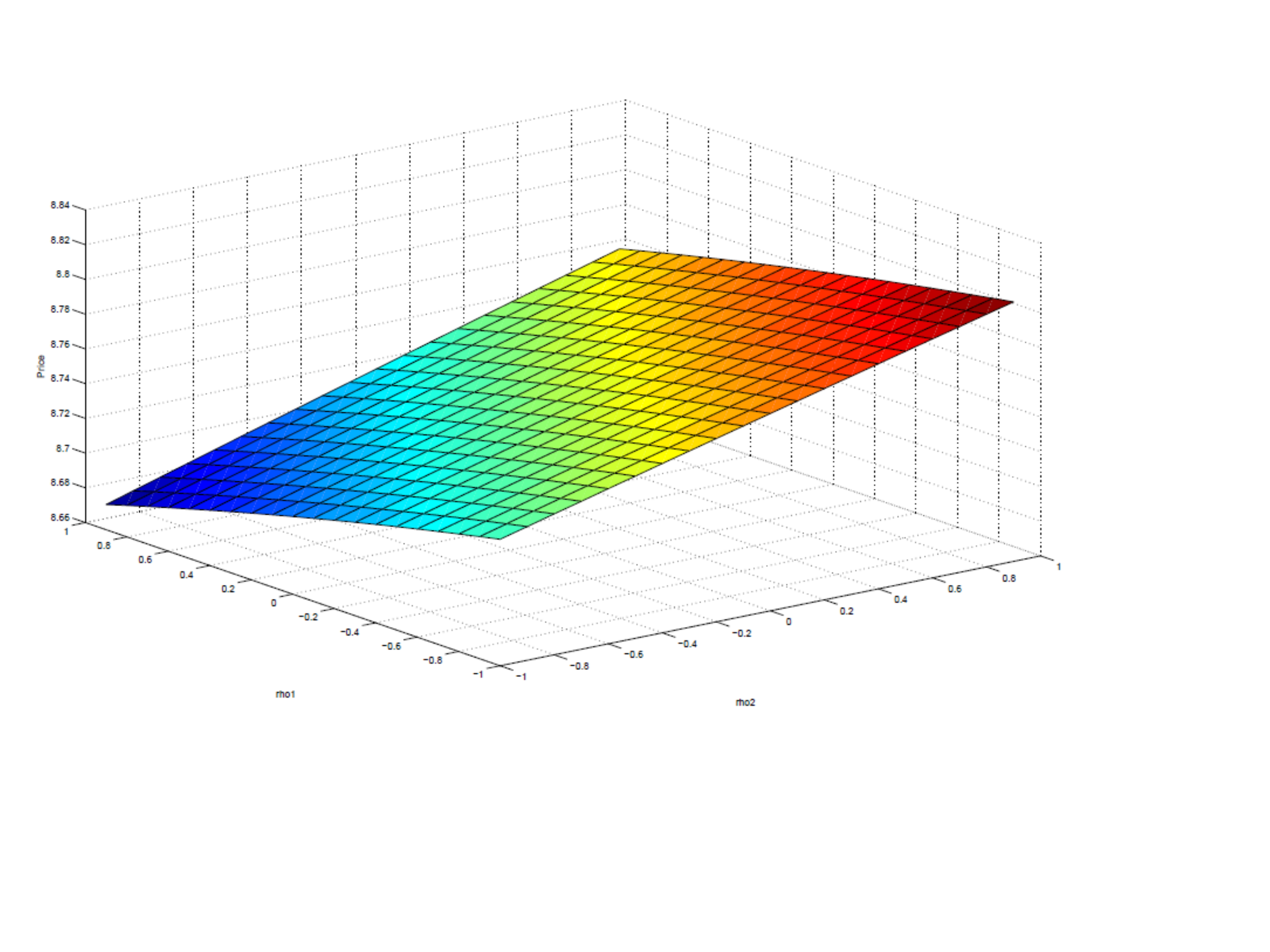}
         \caption{Price with Variation in Asset-Volatility Correlation $\rh{s(m)}{v}$, $\rh{s(1)}{s(2)} = +0.5$,  the remaining parameters belong to the benchmark setting}

        \label{ch:corrvol}
    \end{center}
\end{figure}

We consider next the variation of the correlation between each asset and the driving volatility process. We look first at the case where the assets have positive correlation, as given in Figure \ref{ch:corrvol}. In this plot we see an increase in the price as the correlation between the short asset and the volatility increases, while we see also observe an increase in the price for a constant $\rh{s(2)}{v}$ as we decrease the value of $\rh{s(1)}{v}$. The highest price is observed for a strong correlation between the short asset and the volatility, and a strong negative correlation between the the long asset and the volatility.
\begin{figure}[htb!]
    \begin{center}
        \includegraphics[width=\textwidth]{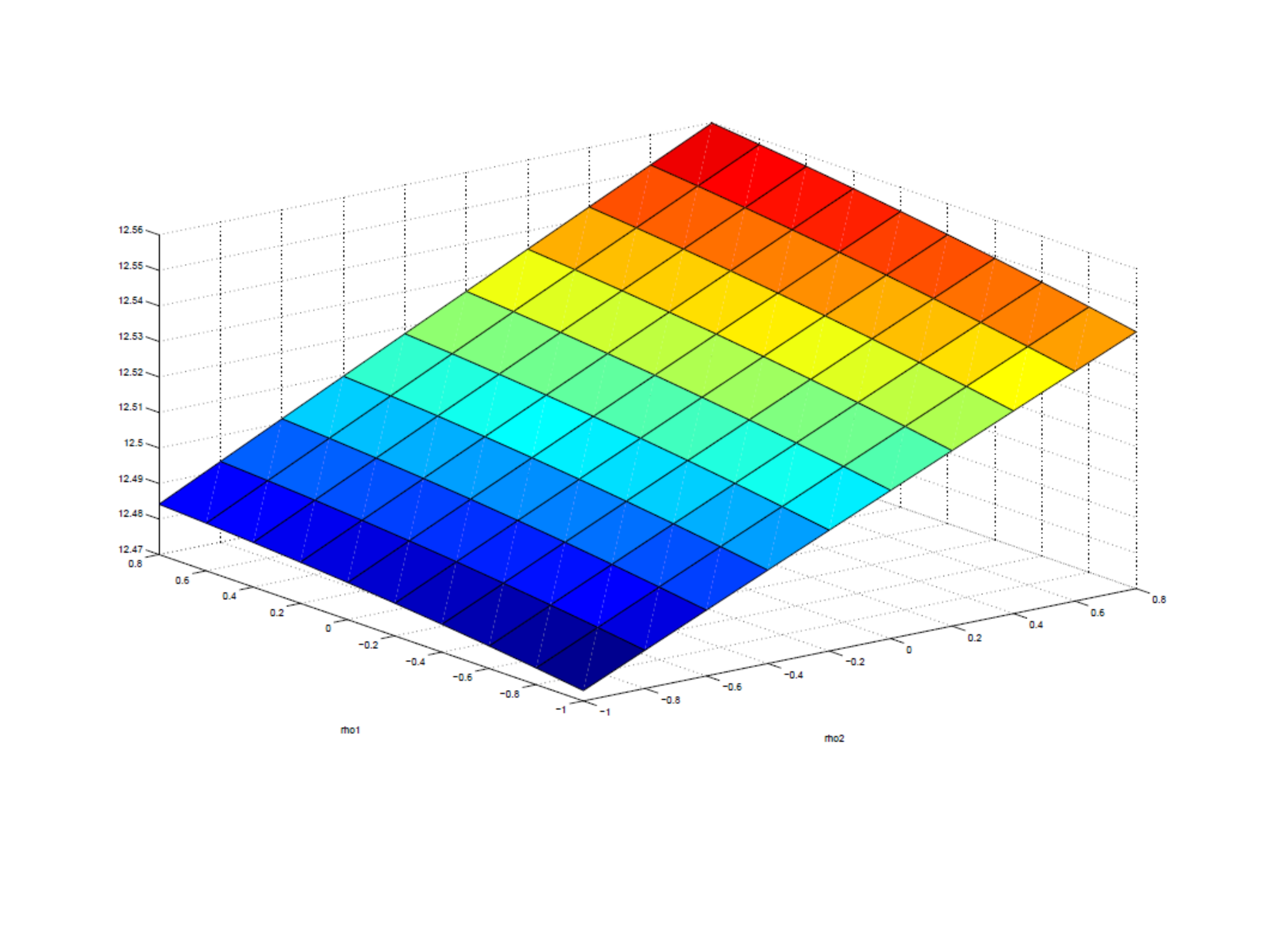}
         \caption{Price with Variation in Asset-Volatility Correlation $\rh{s(m)}{v}$, $\rh{s(1)}{s(2)} = -0.5$,  the remaining parameters belong to the benchmark setting}

        \label{ch:corrvolneg}
    \end{center}
\end{figure}

In the alternative case, as shown in Figure \ref{ch:corrvolneg}, we see the surface rotated, with the maximum price occurring when both $\rh{s(1)}{v}$ and $\rh{s(2)}{v} = 1$. For both cases however, as Table \ref{tbl:corrVol} shows, the selection of correlation parameters has very small effects on the price produced, as we see little variation between the high and low prices produced in this example.
\begin{table}[htb!]
\caption{Comparison of Prices for Variation in Asset-Volatility Correlation $\rh{s(m)}{v}$ with $\rh{s(1)}{s(2)} = -0.5$}
\label{tbl:corrVol}
\begin{tabular}{|l|l|l|l|l|l|l|l|l|l|l|}
\hline
\textbf{$\rh{s(1)}{v}$/$\rh{s(2)}{v}$} & \textbf{-1} & \textbf{-0.8} & \textbf{-0.6} & \textbf{-0.4} & \textbf{-0.2} & \textbf{0} & \textbf{0.2} & \textbf{0.4} & \textbf{0.6} & \textbf{0.8} \\ \hline
\textbf{-1} & 12.47 & 12.48 & 12.49 & 12.5 & 12.5 & 12.51 & 12.52 & 12.53 & 12.53 & 12.54 \\ \hline
\textbf{-0.8} & 12.47 & 12.48 & 12.49 & 12.5 & 12.51 & 12.51 & 12.52 & 12.53 & 12.54 & 12.54 \\ \hline
\textbf{-0.6} & 12.48 & 12.48 & 12.49 & 12.5 & 12.51 & 12.52 & 12.52 & 12.53 & 12.54 & 12.55 \\ \hline
\textbf{-0.4} & 12.48 & 12.49 & 12.49 & 12.5 & 12.51 & 12.52 & 12.53 & 12.53 & 12.54 & 12.55 \\ \hline
\textbf{-0.2} & 12.48 & 12.49 & 12.5 & 12.5 & 12.51 & 12.52 & 12.53 & 12.54 & 12.54 & 12.55 \\ \hline
\textbf{0} & 12.48 & 12.49 & 12.5 & 12.51 & 12.51 & 12.52 & 12.53 & 12.54 & 12.55 & 12.55 \\ \hline
\textbf{0.2} & 12.48 & 12.49 & 12.5 & 12.51 & 12.51 & 12.52 & 12.53 & 12.54 & 12.55 & 12.56 \\ \hline
\textbf{0.4} & 12.48 & 12.49 & 12.5 & 12.51 & 12.52 & 12.52 & 12.53 & 12.54 & 12.55 & 12.56 \\ \hline
\textbf{0.6} & 12.48 & 12.49 & 12.5 & 12.51 & 12.52 & 12.53 & 12.53 & 12.54 & 12.55 & 12.56 \\ \hline
\textbf{0.8} & 12.48 & 12.49 & 12.5 & 12.51 & 12.52 & 12.53 & 12.53 & 12.54 & 12.55 & 12.56 \\ \hline
\end{tabular}
\end{table}

Looking at the effects of the jump frequency and asset correlation in Figure \ref{ch:JumpCorr} we see interesting results. While we see the expected results of the price increasing with both the frequency of jumps, and as our correlation moves towards $-1$, the increase in price is not linear.
\begin{figure}[htb!]
    \begin{center}
        \includegraphics[width=\textwidth]{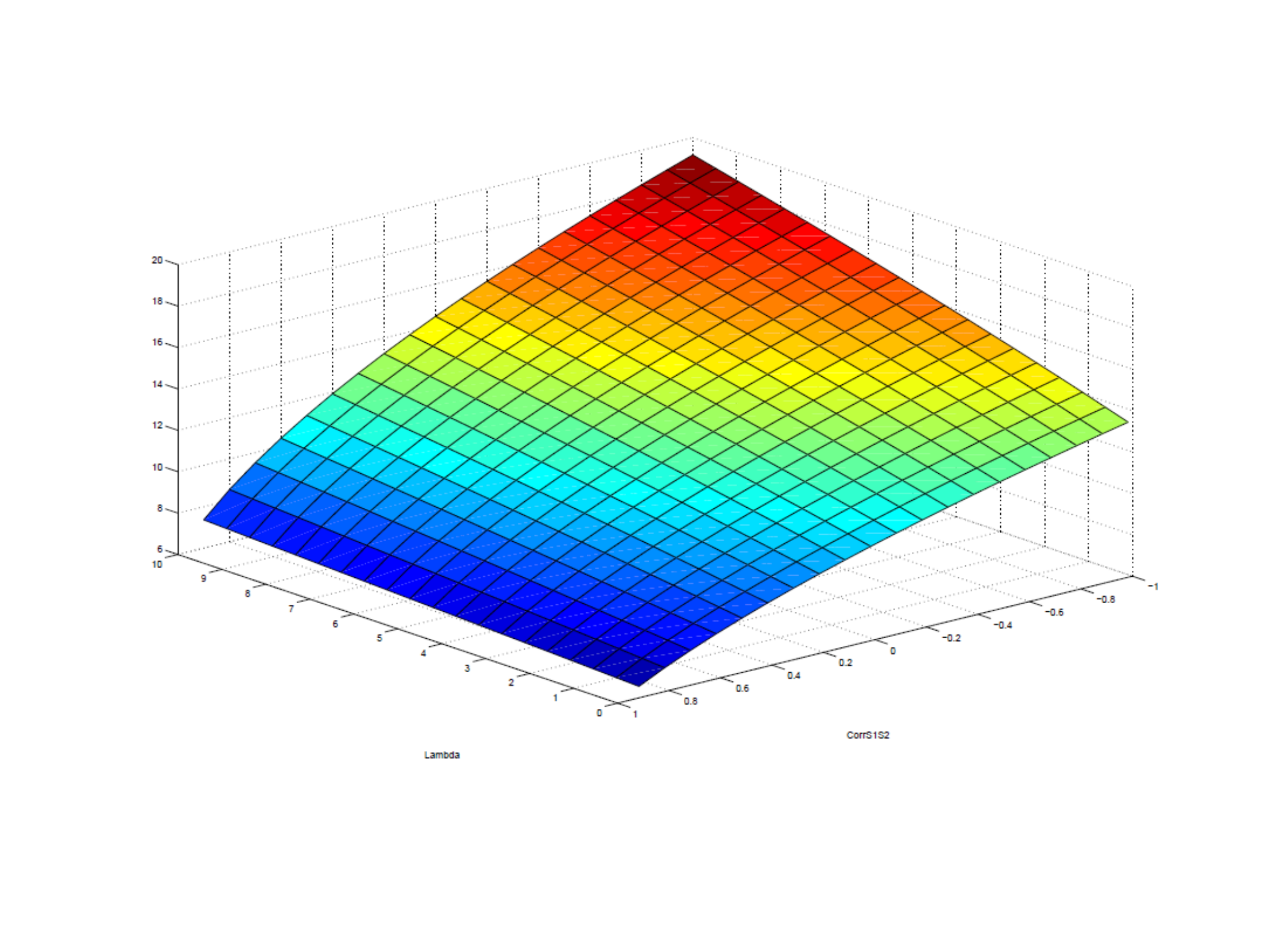}
        \caption{Price with Variation in Jump Frequency and Asset Correlation,  the remaining parameters belong to the benchmark setting}

        \label{ch:JumpCorr}
    \end{center}
\end{figure}

 As seen in table \ref{tbl:LambdaCorr} for lower jump frequencies the effects of decreasing the correlation tends to decrease in the area around -1, while for higher jump frequencies the effect remains strong right to $\rh{s(1)}{s(2)} = -1$. Obviously as we increase the jump frequency for negative correlations we expect more jumps to occur, and as such we expect more sudden movements of the underlying asset prices in opposite directions, which manifests itself in the higher prices produced under our model. We also observe that both parameters have a strong effect on the price, as we see it range from the low end of 7.02 dollars up to 18.90 dollars at the high end.
\begin{table}[htb!]
\caption{Comparison of Prices for Variation in Jump Frequency $\lambda$ and Asset Correlation $\rh{s(1)}{s(2)}$}
\label{tbl:LambdaCorr}
\begin{tabular}{|l|l|l|l|l|l|l|l|l|l|l|}
\hline
\textbf{$\lambda$/$\rh{s(1)}{s(2)}$} & \textbf{-1} & \textbf{-0.8} & \textbf{-0.6} & \textbf{-0.4} & \textbf{-0.2} & \textbf{0} & \textbf{0.2} & \textbf{0.4} & \textbf{0.6} & \textbf{0.8} \\ \hline
\textbf{0.1} & 13.37 & 14.07 & 14.74 & 15.39 & 16.02 & 16.63 & 17.22 & 17.79 & 18.35 & 18.9 \\ \hline
\textbf{1.14} & 12.85 & 13.51 & 14.14 & 14.76 & 15.35 & 15.92 & 16.48 & 17.03 & 17.56 & 18.08 \\ \hline
\textbf{2.18} & 12.3 & 12.92 & 13.51 & 14.09 & 14.64 & 15.18 & 15.71 & 16.22 & 16.72 & 17.21 \\ \hline
\textbf{3.23} & 11.71 & 12.29 & 12.84 & 13.38 & 13.9 & 14.4 & 14.89 & 15.37 & 15.84 & 16.29 \\ \hline
\textbf{4.27} & 11.1 & 11.62 & 12.13 & 12.62 & 13.1 & 13.56 & 14.01 & 14.46 & 14.89 & 15.31 \\ \hline
\textbf{5.31} & 10.43 & 10.91 & 11.37 & 11.81 & 12.24 & 12.66 & 13.07 & 13.47 & 13.86 & 14.24 \\ \hline
\textbf{6.35} & 9.72 & 10.13 & 10.54 & 10.93 & 11.31 & 11.68 & 12.04 & 12.39 & 12.74 & 13.08 \\ \hline
\textbf{7.39} & 8.93 & 9.28 & 9.62 & 9.95 & 10.27 & 10.58 & 10.89 & 11.19 & 11.49 & 11.77 \\ \hline
\textbf{8.44} & 8.05 & 8.32 & 8.58 & 8.84 & 9.09 & 9.33 & 9.57 & 9.81 & 10.04 & 10.27 \\ \hline
\textbf{9.48} & 7.02 & 7.19 & 7.35 & 7.52 & 7.68 & 7.83 & 7.99 & 8.14 & 8.3 & 8.45 \\ \hline
\end{tabular}
\end{table}
It is also interesting to observe the effects of varying the mean jump size parameter, $\overline{k}$, for each asset. As Figure \ref{ch:JumpSize} shows we see a large increase in the price produced as we move away from equal expected jump sizes. While this result is expected (if we believe one asset will jump with larger magnitude than the other than naturally we should believe that the spread between the assets will change), the extent of the variation is interesting to observe.
\begin{figure}[htb!]
    \begin{center}
        \includegraphics[width=\textwidth]{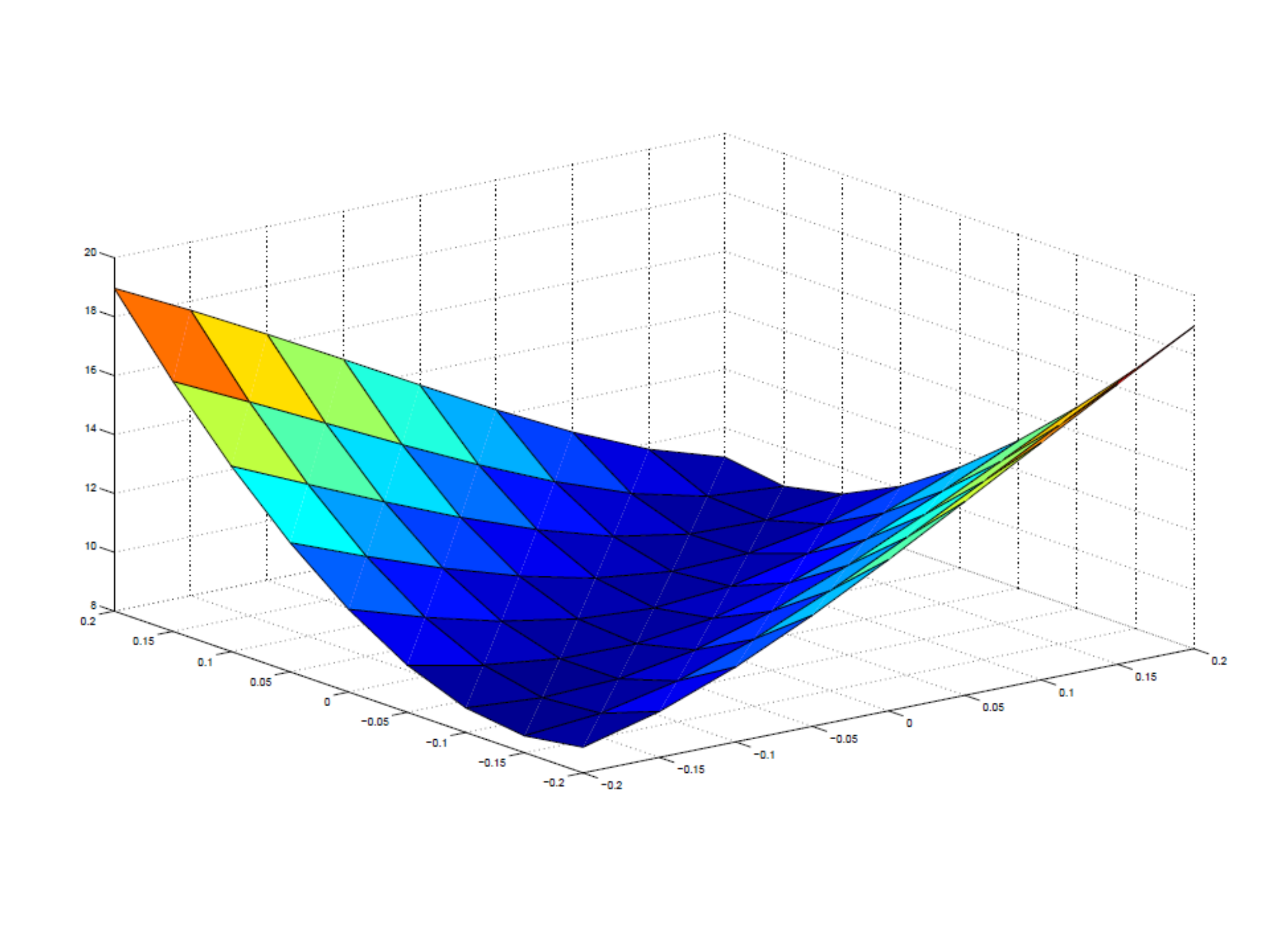}
        \caption{Price with Variation in Mean Jump Size,  the remaining parameters belong to the benchmark setting}
                \label{ch:JumpSize}
    \end{center}
\end{figure}

 We also observe that the price increases as long as the mean jump sizes are not equal, but regardless of the sign of the mean jump sizes. Table \ref{tbl:JumpSize} shows that the price is extremely sensitive to the mean jump size, and particularly to the differences in the mean jump size. This effect is magnified when we increase the parameter $\lambda$ as the number of jumps, and hence the effect of the jumps on the price increases. In these cases the slope as we leave the region where the mean jump sizes are equal tends to increase more rapidly.
\begin{table}[htb!]
\caption{Comparison of Prices for case for Variation in Mean Jump Size $\overline{k^{(m)}}$}
\label{tbl:JumpSize}
\begin{tabular}{|l|l|l|l|l|l|l|l|l|l|}
\hline
\textbf{$\overline{k^{(1)}}$/$\overline{k^{(2)}}$} & \textbf{-0.2} & \textbf{-0.15} & \textbf{-0.1} & \textbf{-0.05} & \textbf{0} & \textbf{0.05} & \textbf{0.1} & \textbf{0.15} & \textbf{0.2}  \\ \hline
\textbf{-0.2} & 8.87 & 9.56 & 10.55 & 11.77 & 13.14 & 14.58 & 16.06 & 17.52 & 18.95 \\ \hline
\textbf{-0.15} & 8.57 & 8.81 & 9.38 & 10.26 & 11.39 & 12.67 & 14.03 & 15.42 & 16.81 \\ \hline
\textbf{-0.1} & 8.83 & 8.65 & 8.77 & 9.22 & 10 & 11.03 & 12.23 & 13.51 & 14.82 \\ \hline
\textbf{-0.05} & 9.6 & 9.07 & 8.76 & 8.74 & 9.08 & 9.76 & 10.7 & 11.81 & 13.01 \\ \hline
\textbf{0} & 10.81 & 10.01 & 9.35 & 8.9 & 8.75 & 8.96 & 9.53 & 10.38 & 11.41 \\ \hline
\textbf{0.05} & 12.39 & 11.39 & 10.46 & 9.67 & 9.07 & 8.77 & 8.86 & 9.32 & 10.09 \\ \hline
\textbf{0.1} & 14.3 & 13.16 & 12.04 & 10.98 & 10.03 & 9.28 & 8.82 & 8.77 & 9.13 \\ \hline
\textbf{0.15} & 16.49 & 15.27 & 14.01 & 12.76 & 11.55 & 10.45 & 9.52 & 8.9 & 8.71 \\ \hline
\textbf{0.2} & 18.97 & 17.69 & 16.35 & 14.97 & 13.57 & 12.21 & 10.92 & 9.81 & 9.01 \\ \hline
\end{tabular}
\end{table}

\begin{figure}[htb!]
    \begin{center}
        \includegraphics[clip=false,scale=0.25]{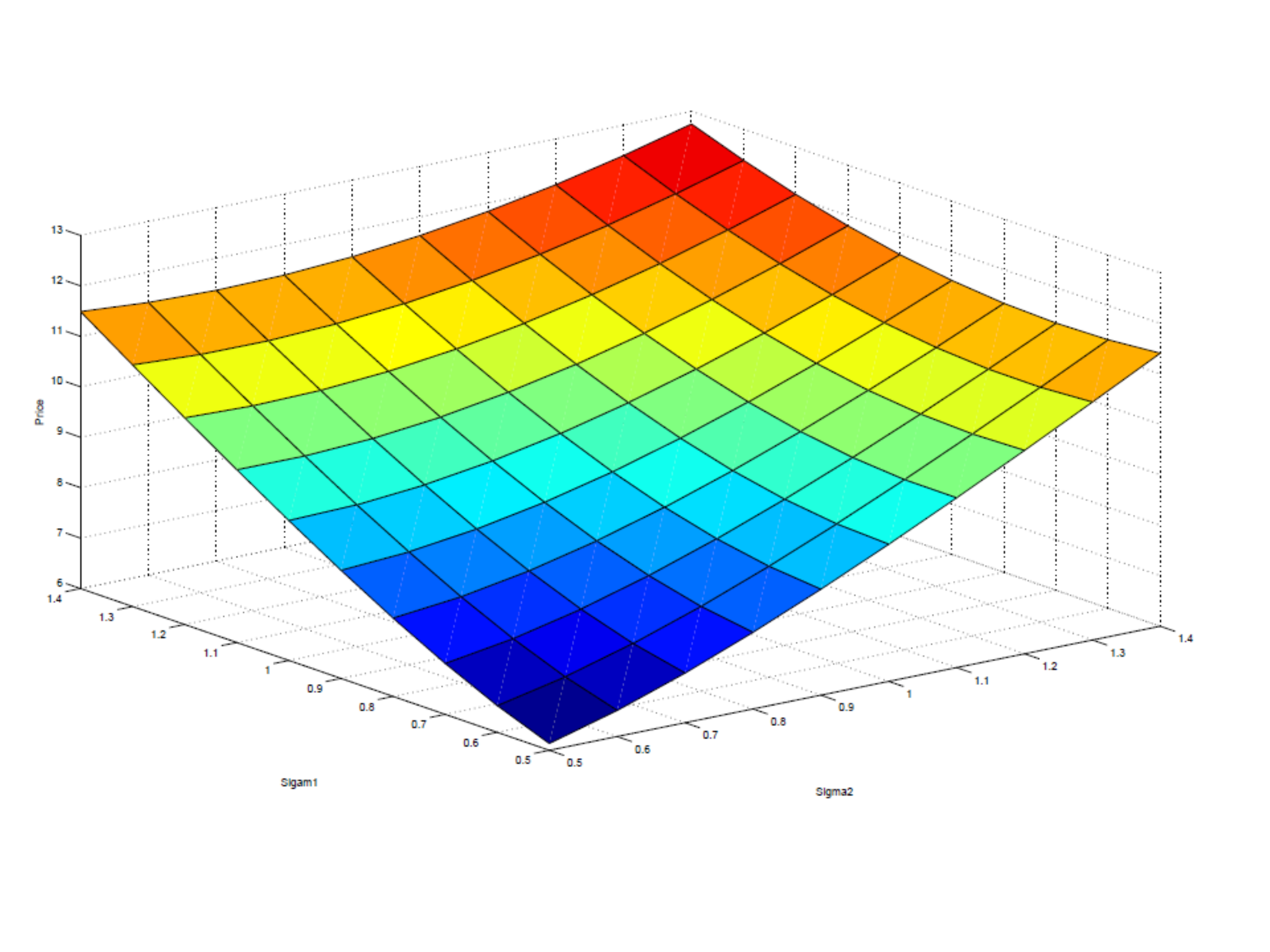}
        \caption{Price with Variation in Asset-Volatility Multiplier $\sigma^{(m)}$,  the remaining parameters belong to the benchmark setting}
        \label{ch:AssetVolMult}
    \end{center}
\end{figure}

Figure \ref{ch:AssetVolMult} shows the variation in price for different values of the asset-volatility multiplier, $\sigma^{(m)}$. For both assets we see an increase as we increase $\sigma^{(m)}$, as this tends to increase the overall volatility of each asset. The effects of $\sigma^{(m)}$ are more muted than other parameters, as shown in Table \ref{tbl:Sigma}, as they tend to linearly increase the price and the price does not seem to be more strongly related to either $\sigma^{(1)}$ or $\sigma^{(2)}$.
\begin{table}[htb!]
\caption{Comparison of Prices for case for Variation in $\sigma^{(m)}$}
\label{tbl:Sigma}
\begin{tabular}{|l|l|l|l|l|l|l|l|l|l|l|}
\hline
\textbf{$\sigma^{(1)}$/$\sigma^{(2)}$} & \textbf{0.5} & \textbf{0.6} & \textbf{0.7} & \textbf{0.8} & \textbf{0.9} & \textbf{1} & \textbf{1.1} & \textbf{1.2} & \textbf{1.3} & \textbf{1.4} \\ \hline
\textbf{0.5} & 6.13 & 6.52 & 6.98 & 7.52 & 8.1 & 8.72 & 9.36 & 10.03 & 10.72 & 11.41 \\ \hline
\textbf{0.6} & 6.53 & 6.83 & 7.21 & 7.68 & 8.2 & 8.77 & 9.37 & 10 & 10.65 & 11.32 \\ \hline
\textbf{0.7} & 7.01 & 7.23 & 7.54 & 7.93 & 8.39 & 8.9 & 9.45 & 10.04 & 10.66 & 11.3 \\ \hline
\textbf{0.8} & 7.55 & 7.7 & 7.94 & 8.26 & 8.66 & 9.11 & 9.61 & 10.15 & 10.73 & 11.33 \\ \hline
\textbf{0.9} & 8.15 & 8.23 & 8.41 & 8.67 & 9 & 9.39 & 9.84 & 10.33 & 10.86 & 11.43 \\ \hline
\textbf{1} & 8.77 & 8.81 & 8.93 & 9.13 & 9.41 & 9.74 & 10.14 & 10.58 & 11.06 & 11.59 \\ \hline
\textbf{1.1} & 9.43 & 9.42 & 9.5 & 9.65 & 9.86 & 10.15 & 10.49 & 10.88 & 11.32 & 11.8 \\ \hline
\textbf{1.2} & 10.1 & 10.06 & 10.1 & 10.2 & 10.37 & 10.6 & 10.89 & 11.24 & 11.63 & 12.07 \\ \hline
\textbf{1.3} & 10.79 & 10.72 & 10.72 & 10.78 & 10.91 & 11.1 & 11.34 & 11.64 & 11.99 & 12.39 \\ \hline
\textbf{1.4} & 11.49 & 11.4 & 11.37 & 11.39 & 11.48 & 11.63 & 11.83 & 12.09 & 12.4 & 12.75 \\ \hline
\end{tabular}
\end{table}
Finally, we examine the effect of varying the jump-size variance on the price, as shown in Figure \ref{ch:JumpVar}. As we increase the variance in the jumps for both assets we see the price increase, as we would expect, and in a non-linear fashion.
\begin{figure}[htb!]
    \begin{center}
        \includegraphics[width=\textwidth]{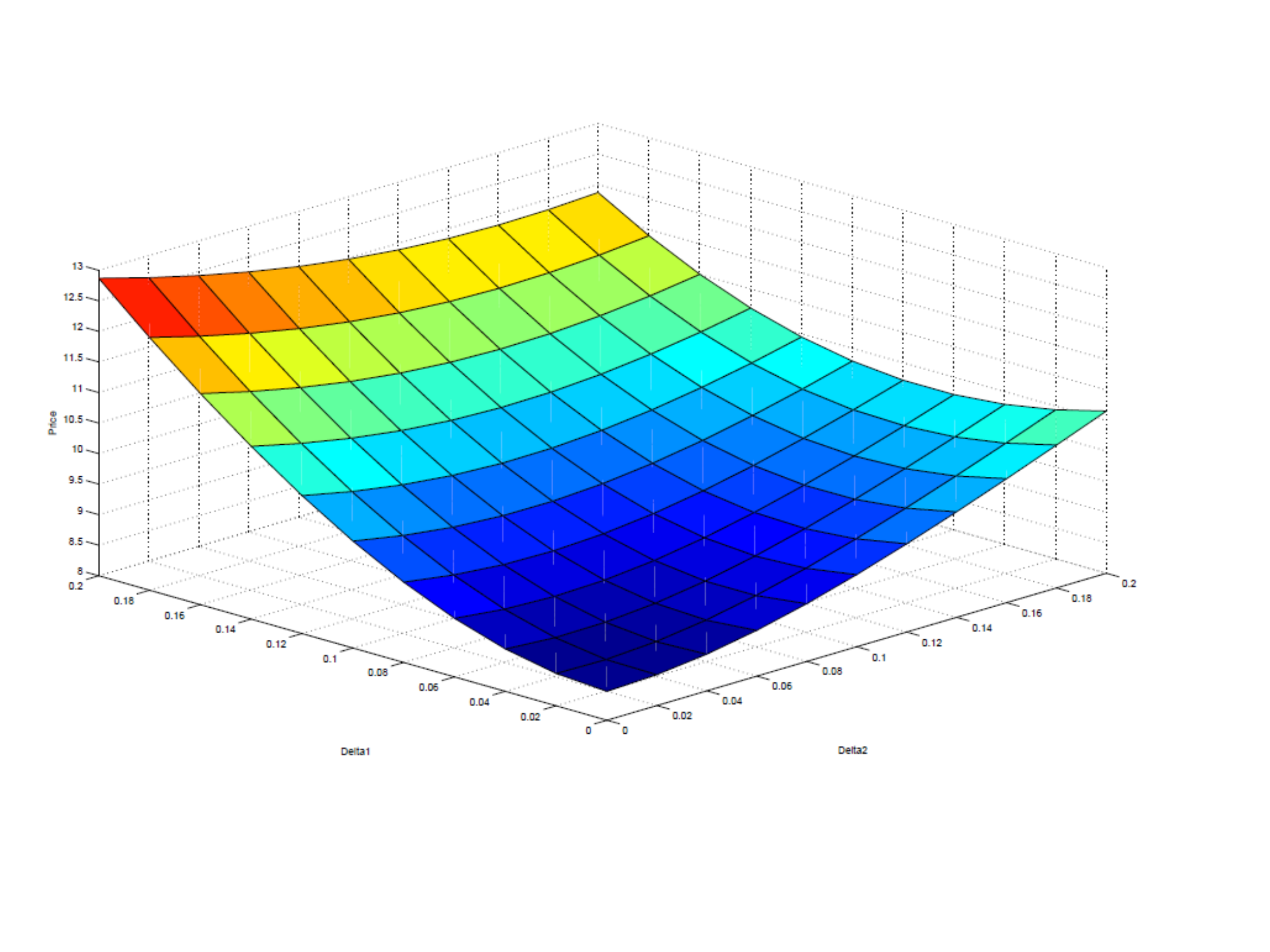}
        \caption{Price with Variation in Jump-Size Variance,  the remaining parameters belong to the benchmark setting}
               \label{ch:JumpVar}
    \end{center}
\end{figure}
Increases in the variance for the long asset tend to have a greater effect on the price, since there's a greater probability that the long asset will increase it's spread over the short asset and increasing the variance of jumps in both assets tends to compound the effects, as we would expect. Table \ref{tbl:Delta} shows that the jump-size variance parameter has a fairly substantial effect on the price, as increasing from 0.02 to 0.2 can cause an increase of 40\% in the price of the spread.
\begin{table}[htb!]
\caption{Comparison of Prices for case for Variation in Jump-Size Variance $\delta^{(m)}$}
\label{tbl:Delta}
\begin{tabular}{|l|l|l|l|l|l|l|l|l|l|l|}
\hline
\textbf{$\delta^{(1)}$/$\delta^{(2)}$} & \textbf{0.02} & \textbf{0.04} & \textbf{0.06} & \textbf{0.08} & \textbf{0.1} & \textbf{0.12} & \textbf{0.14} & \textbf{0.16} & \textbf{0.18} & \textbf{0.2} \\ \hline
\textbf{0.02} & 8.52 & 8.57 & 8.68 & 8.84 & 9.05 & 9.29 & 9.55 & 9.84 & 10.13 & 10.44 \\ \hline
\textbf{0.04} & 8.64 & 8.65 & 8.72 & 8.84 & 9.01 & 9.22 & 9.46 & 9.72 & 10 & 10.29 \\ \hline
\textbf{0.06} & 8.87 & 8.84 & 8.86 & 8.94 & 9.07 & 9.25 & 9.46 & 9.69 & 9.95 & 10.22 \\ \hline
\textbf{0.08} & 9.19 & 9.12 & 9.11 & 9.14 & 9.23 & 9.37 & 9.54 & 9.74 & 9.97 & 10.22 \\ \hline
\textbf{0.1} & 9.6 & 9.5 & 9.44 & 9.44 & 9.49 & 9.58 & 9.71 & 9.88 & 10.08 & 10.29 \\ \hline
\textbf{0.12} & 10.09 & 9.95 & 9.87 & 9.83 & 9.83 & 9.88 & 9.98 & 10.11 & 10.27 & 10.45 \\ \hline
\textbf{0.14} & 10.64 & 10.48 & 10.36 & 10.29 & 10.26 & 10.27 & 10.33 & 10.41 & 10.54 & 10.69 \\ \hline
\textbf{0.16} & 11.25 & 11.07 & 10.93 & 10.83 & 10.76 & 10.74 & 10.75 & 10.81 & 10.89 & 11 \\ \hline
\textbf{0.18} & 11.92 & 11.73 & 11.56 & 11.43 & 11.34 & 11.28 & 11.26 & 11.27 & 11.32 & 11.4 \\ \hline
\textbf{0.2} & 12.64 & 12.43 & 12.25 & 12.1 & 11.98 & 11.89 & 11.84 & 11.82 & 11.83 & 11.87 \\ \hline
\end{tabular}
\end{table}

\subsection{Effect of Discretization and Truncation}

As Hurd and Zhou  observe, the effect of varying the damping parameter $\epsilon$ on the price is relatively small. It should be noted, however that this is not universally true, and that for values of $\epsilon_2 < 0.2$, or for  $\epsilon_1 - \epsilon_2 - 1 < 0.2$ we do observe some error in the prices produced by our FFT method, as seen in Table (\ref{tbl:Eps}).
\begin{table}[htb!]
\begin{center}
\caption{Variation in Price for various choices of $\epsilon = (\epsilon_1, \epsilon_2)$}
\label{tbl:Eps}
\begin{tabular}{|l|l|l|l|l|l|l|l|l|l|}
\hline
 \textbf{$\epsilon_1 / \epsilon_2$} & \textbf{0.1} & \textbf{0.2} & \textbf{0.3} & \textbf{0.4} & \textbf{0.5} & \textbf{1.6} & \textbf{1.7} & \textbf{1.8} & \textbf{1.9} \\ \hline
\textbf{-3.0} & 10.7 & 8.81 & 8.77 & 8.77 & 8.77 & 8.77 & 8.77 & 8.78 & 9.08 \\ \hline
\textbf{-2.9} & 10.7 & 8.81 & 8.77 & 8.77 & 8.77 & 8.77 & 8.78 & 9.1 & - \\ \hline
\textbf{-2.8} & 10.7 & 8.81 & 8.77 & 8.77 & 8.77 & 8.78 & 9.11 & - & - \\ \hline
\textbf{-2.7} & 10.7 & 8.81 & 8.77 & 8.77 & 8.77 & 9.12 & 7.17E+13 & - & - \\ \hline
\textbf{-1.6} & 10.7 & 8.81 & 8.77 & 8.78 & 9.29 & - & - & - & - \\ \hline
\textbf{-1.5} & 10.7 & 8.81 & 8.78 & 9.31 & - & - & - & - & - \\ \hline
\textbf{-1.4} & 10.7 & 8.82 & 9.33 & - & - & - & - & - & - \\ \hline
\textbf{-1.3} & 10.71 & 9.39 & - & - & - & - & - & - & - \\ \hline
\textbf{-1.2} & 11.34 & 1.20E+15 & - & - & - & - & - & - & - \\ \hline
\end{tabular}
\end{center}
\end{table}

We can also test the sensitivity of our model to variations in the number of steps and the step size. Note that since $\bar{u} = \frac{N \eta}{2}$ for a fixed N if we attempt to decrease the truncation error by increasing $\bar{u}$ (or, in the case of our optimal step size algorithm, $\bar{u}_{min}$), our discretization error will increase, as the step size integration interval is proportional to the step-size. Fortunately, however, our method shows very little sensitivity to either the step-size or the truncation interval, as displayed in Table \ref{tbl:NStep}.
\begin{table}[htb!]
\begin{center}
\caption{Variation in Prices with change in $N$ and $\bar{u}_{min}$}
\label{tbl:NStep}
\begin{tabular}{|l|l|l|l|l|}
 \hline
\textbf{$\bar{u}_{min}$ / $N$} & \textbf{128} & \textbf{256} & \textbf{512} & \textbf{1024} \\ \hline
\textbf{40} & 8.777635 & 8.772048 & 8.772048 & 8.772048 \\ \hline
\textbf{60} & - & 8.772241 & 8.772048 & 8.772048 \\ \hline
\textbf{80} & - & 8.777052 & 8.772048 & 8.772048 \\ \hline
\textbf{120} & - & - & 8.772060 & 8.772048 \\ \hline
\textbf{140} & - & - & 8.772223 & 8.772048 \\ \hline
\textbf{160} & - & - & 8.773194 & 8.772048 \\
  \hline
\end{tabular}
\end{center}
\end{table}

\section{Conclusion}
This paper has extended the model of Bates(1996) to investigate two market multivariate market models with both jumps and stochastic volatility, and derived the characteristic function under each model. Using Fourier transform techniques and the implementation of Hurd and Zhou(2009)  we have been able to produce results which very closely matched those produced by Monte Carlo methods in a fraction of the computation time.

As expected, we saw that the prices produced under our models were very sensitive to both the jump and correlation parameters. Having these additional components in our models gives us powerful tools to mold our model through parameter variation to better reflect certain characteristics observed in financial markets such as volatility and correlation smiles and smirks. On the other hand, this underscores the need for development of model calibration tools, which is not a straightforward task in the presence of jumps.

\end{document}